\pdfoutput=1
\documentclass[11pt,a4paper]{article}
\usepackage[margin=1in]{geometry}
\usepackage[T1]{fontenc}
\usepackage{lmodern}
\usepackage{amsmath,amssymb,amsthm}
\usepackage{booktabs}
\usepackage{array}
\usepackage{tikz}
\usetikzlibrary{decorations.pathreplacing}
\usepackage[ruled,vlined,linesnumbered]{algorithm2e}
\usepackage[hidelinks]{hyperref}

\newtheorem{theorem}{Theorem}
\newtheorem{lemma}{Lemma}
\newtheorem{corollary}{Corollary}
\theoremstyle{definition}
\newtheorem{definition}{Definition}
\newtheorem{example}{Example}

\newcommand{\cL}{\mathcal{L}}
\newcommand{\cT}{\mathcal{T}}
\newcommand{\cD}{\mathcal{D}}
\newcommand{\Gint}[2]{G_{[#1,#2]}}
\newcommand{\Np}{N^{+}}
\newcommand{\first}{\mathrm{first}}
\newcommand{\poly}{\mathrm{poly}}

\begin{document}

\title{Tight bounds and output sensitive algorithms for maximal clique enumeration in link streams}
\author{George Manoussakis\\[2pt]
{\small LI-PARAD, Universit\'e de Versailles Saint-Quentin-en-Yvelines, Universit\'e Paris-Saclay, France}\\
{\small\texttt{george.manoussakis@uvsq.fr}}}
\date{}

\maketitle

\begin{abstract}
Given a link stream, we are interested in the question of finding all its maximal cliques. A link stream is a set of interactions between pairs of vertices, each one lasting during some time interval, and a clique of a link stream is a set of vertices together with a time interval during which all of them interact. A clique is maximal if neither its vertex set nor its interval can be enlarged. This question models the detection of groups in temporal data and has been studied for about ten years, and several algorithms list maximal cliques efficiently in practice. All known algorithms may however spend time exponential in the size of a clique for each clique they output, and the best known bounds on the number of maximal cliques carry a factor exponential in the degeneracy of the instantaneous graphs that is larger than in static graphs. Here, we approach the question under the light of two parameters of the instantaneous graphs of the stream, their maximum degree $\Delta$ and their degeneracy $k$. We first prove that a link stream with $m$ links has $O(m\Delta^2 3^{\Delta/3})$ maximal cliques, and that a link stream with $n$ vertices and $|\cT|$ distinct end times has $O(n|\cT|k^2 3^{k/3})$ maximal cliques. Both bounds are tight up to a factor polynomial in $\Delta$ and $k$ respectively, and the second one improves the factor $2^k$ of previous bounds to $3^{k/3}$. We also show that no bound of the form $m\cdot f(k)$ can hold. Then we present two output sensitive algorithms. The first one has setup time $O(m\log m)$ and polynomial time delay $\poly(\Delta)\log m$. The second one has setup time $O(m\log m+mk^2\log^3 n)$ and polynomial time delay $\poly(k)\log m$, and does not depend on the maximum degree. To the best of our knowledge, these are the first algorithms with polynomial time delay for this problem, in general link streams with durations and with both the vertex set and the interval maximal. Finally, we show that an online version of the second algorithm outputs every maximal clique at its end time in total time $O(m\log m+mk\log^3n)+\alpha\poly(k)\log n$, where $\alpha$ is the number of maximal cliques, and that all our results apply to the $\Delta$-cliques of temporal graphs. Our tools are a threshold characterization of maximal cliques, the decomposition by degeneracy orderings used for static graphs, a suffix argument that decides maximality without inspecting the neighborhood of a vertex, and dynamic orientations of graphs of bounded arboricity.
\end{abstract}

\noindent\textbf{Keywords:} enumeration, algorithms, maximal cliques, link streams, temporal graphs, $k$-degenerate graphs.

\section{Introduction}

Many data sets record interactions between entities together with the moment at which they happen and the time during which they last. For instance, phone calls, face-to-face contacts measured by sensors, messages, or co-presence in a place. Such data are naturally modeled by link streams, also called temporal graphs or temporal networks \cite{HS12,CFQS12,Mic16,LVM18}. Some of these formalisms use discrete time steps and instantaneous interactions, others continuous time and interactions with durations. Many problems which are polynomial on static graphs become NP-hard in the temporal setting, even on very restricted inputs, and the thesis of Molter \cite{Mol20} gives a parameterized overview. We use link streams with durations \cite{LVM18,VML18,BMT24}, which contain the $\Delta$-cliques of discrete temporal graphs as a special case (Section~\ref{sec:delta}). In a link stream, a link $(b,e,u,v)$ says that the vertices $u$ and $v$ interact during the time interval $[b,e]$. At every time $t$, the links present at $t$ form a static graph called the instantaneous graph at time $t$. Real link streams are often sparse in the sense that their instantaneous graphs have small maximum degree or small degeneracy, which is the smallest integer $k$ such that every subgraph contains a vertex of degree at most $k$. For instance, Bentert et al.\ \cite{BHM+19} measured a temporal version of the degeneracy on many real-world temporal graphs and observed that it is small, and Baudin et al.\ \cite{BMT24} observed that the maximum degree of the instantaneous graphs varies in a much smaller range than the number of vertices and links.

Groups of entities that all interact with each other during some period are a basic object of study in these data. Viard, Latapy and Magnien introduced the corresponding notion of clique \cite{VLM15,VLM16}. A clique of a link stream is a pair $(C,[t_0,t_1])$ where $C$ is a set of at least two vertices and every pair of vertices of $C$ interacts during the whole interval $[t_0,t_1]$. It is maximal if no vertex can be added to $C$ and if the interval cannot be extended. Maximal cliques have been used to reveal contact patterns among high-school students \cite{VLM15} and to define communities in link streams \cite{BMT23}. It thus seems interesting both from a practical and theoretical point of view to study the enumeration of maximal cliques in link streams, and to take into account the sparsity of their instantaneous graphs.

\paragraph*{Maximal cliques in static graphs} For static graphs, the question has been extensively studied and we can essentially distinguish between two families of algorithms. On one side, worst-case output size algorithms have been proposed. Their complexities match the maximum number of maximal cliques one can find in the considered graphs. For instance, Tomita et al.\ \cite{TTT06} propose an algorithm enumerating all maximal cliques of an $n$-order graph in time $O(3^{n/3})$, which is optimal since the Moon--Moser graphs have $3^{n/3}$ maximal cliques \cite{MM65}. Similarly, for $k$-degenerate graphs, Eppstein et al.\ \cite{ELS13} prove an $O((n-k)3^{k/3})$ bound on the number of maximal cliques, which is attained, and give an algorithm running in time $O(k(n-k)3^{k/3})$, which is fast on large real-world graphs \cite{ES11}. Both algorithms are based on the Bron--Kerbosch algorithm \cite{BK73}. On the other side, output sensitive algorithms bound the time spent for each maximal clique by a polynomial in the parameters of the graph, and many of them have polynomial time delay, that is, the wait between the output of two maximal cliques is polynomial. Tsukiyama et al.\ \cite{TIAS77} gave the first such algorithm, with delay $O(nm)$, and Johnson et al.\ \cite{JYP88} showed how to list the maximal cliques in lexicographic order with polynomial delay. Chiba and Nishizeki \cite{CN85} listed them in time $O(am)$ per clique, where $a$ is the arboricity. Makino and Uno \cite{MU04} obtained a delay of order the fourth power of the maximum degree and, with fast matrix multiplication, $O(n^{\omega})$; Comin and Rizzi \cite{CR18} improved the latter with rectangular matrix multiplication. Chang et al.\ \cite{CYQ13} gave an output sensitive algorithm for sparse graphs analysed with the $h$-index of the graph, and Conte et al.\ \cite{CGMV20} obtained a delay depending on the degeneracy, the maximum degree and the maximum clique size, in space linear in the graph. Conte and Tomita showed that the delay of the Bron--Kerbosch algorithm with pivoting can be exponential \cite{CT22}. The author proposed a decomposition of a $k$-degenerate graph into $n$ subgraphs of at most $k+1$ vertices which yields an enumeration time depending only on the degeneracy \cite{Man19}. A clique which is maximal in a subgraph but not in the graph is a proper suffix of a maximal clique found before, so that such cliques can be rejected with a generalized suffix tree; we use the same observation in Section~\ref{sec:degeneracy}. A delay depending only on the degeneracy can then be obtained by buffering the solutions \cite{ManDelay}. Several of these algorithms, and ours, first guarantee a bound on the time needed to produce the first $i$ solutions, for every $i$, and then turn it into a delay bound by buffering solutions in a queue. This technique goes back to Uno \cite{Uno15}. Capelli and Strozecki analysed it and gave a geometric amortization scheme which achieves a similar delay, up to a logarithmic factor, in polynomial space \cite{CS23,CS26}. Related enumeration problems include $k$-plexes \cite{BCK15,JM25} and $k$-defective cliques \cite{DLLW23,JM25}.

\paragraph*{Maximal cliques in link streams} The known results are summarized in Tables~\ref{tab:sota} and~\ref{tab:counts}, and we present them in chronological order. Viard, Latapy and Magnien first defined $\Delta$-cliques in link streams in which interactions are instantaneous \cite{VLM15,VLM16}. In a $\Delta$-clique $(X,[a,b])$, every pair of vertices of $X$ interacts at least once in every subinterval of $[a,b]$ of length $\Delta$. We keep the name $\Delta$-clique used in the literature, but we denote by $\theta$ the length of these subintervals, since $\Delta$ denotes the maximum degree in this paper. Viard et al.\ gave an algorithm which grows cliques in time and in vertices from each link, in time exponential in $n$ \cite{VLM16}, and they then extended it to links with durations \cite{VML18}.

Himmel et al.\ \cite{HMNS16,HMNS17} adapted the Bron--Kerbosch algorithm with pivoting to $\Delta$-cliques. They introduced the $\Delta$-slice degeneracy, which is the largest degeneracy of the graphs formed by the interactions of a window of length $\theta$, and showed that the problem is fixed-parameter tractable with respect to this parameter. The running time of their algorithm is bounded in terms of the number of time-maximal cliques, which are the cliques whose interval cannot be extended but whose vertex set might be. The effect of pivoting on the worst-case running time was left open. Bentert et al.\ \cite{BHM+18,BHM+19} extended this algorithm to $\Delta$-$k$-plexes. For cliques, they bounded the number of time-maximal cliques by $n2^{k+1}\min(m,|T|)$, where $k$ is the $\Delta$-slice degeneracy and $|T|$ the number of time steps. We use the letter $k$ for the $\Delta$-slice degeneracy since, by Section~\ref{sec:delta}, it is the largest degeneracy of the instantaneous graphs of a link stream associated with the temporal graph. This improves the bound $3^{k/3}2^{k+1}n|T|$ of \cite{HMNS17}. These results are presented together in the thesis of Molter \cite{Mol20}. Molter et al.\ studied isolated temporal cliques, which have few links to the rest of the stream, and gave fixed-parameter algorithms with respect to the isolation parameter \cite{MNR19,MNR21}. They note that their algorithm is not output sensitive.

Banerjee and Pal required every pair to interact at least $\gamma$ times in every window of length $\theta$. They gave an algorithm listing the maximal $(\Delta,\gamma)$-cliques in time $O(2^nn^2m^3+2^nn^3m^2)$ \cite{BP19}, then a two-phase algorithm which first stretches cliques of two vertices in time and then adds vertices \cite{BP21,BP24}, and an incremental algorithm \cite{BP22}. Pal adapted the algorithm of Himmel et al.\ to $(\Delta,\gamma)$-cliques, with running time $O(2^k3^{k/3}n|T|m+n^2|T|^2)$, where $n$ and $m$ are the numbers of vertices and edges and $k$ is the $\Delta$-slice degeneracy, which ignores $\gamma$ \cite{Pal24}. Boekhout and Takes extended the two-phase approach to weighted temporal networks, in which $\gamma$ bounds the total weight of the interactions of a pair in a window \cite{BT25}. They perform the first phase in time $O(m)$ instead of $O(\gamma m)$, and their algorithm is much faster in practice. They also require the interval of a clique to start and end with an interaction between two of its vertices, which changes the set of maximal cliques.

Baudin et al.\ \cite{BMT24,Bau23} gave an algorithm which, for every time $t$ at which a link starts, lists the cliques of the instantaneous graph at $t$ which contain this link, and keeps those which are vertex-maximal. It handles link streams with more than $10^8$ links. Its running time is $O(m3^{\Delta/3}2^q\Delta^2)$, where $\Delta$ and $q$ are the largest maximum degree and clique size of an instantaneous graph. Its output sensitive running time is $O(\frac1r\Delta^2q\alpha)$, where $\alpha$ is the number of maximal cliques and $r$ the proportion of leaves of the search tree which output a maximal clique. The authors show that $1\le\frac1r\le 2^q$ and that $\frac1r=2^{q-1}-1$ can occur without pivoting, while with their pivot $\frac1r$ is below 2 in their experiments. The thesis of Baudin \cite{Bau23} states the understanding of this factor as a perspective.

Davot et al.\ \cite{DEMM25} consider the same maximal $\Delta$-cliques as we do, in temporal graphs which satisfy two assumptions. The first one is a temporal version of the closure of static graphs: two vertices with at least $\rho$ common neighbors during a short interval of time are adjacent around that time. The second one is that the neighborhood of every vertex changes by at most $\eta$ vertices between two consecutive time steps. They prove that the number of maximal $\Delta$-cliques is then at most $3\cdot2^{\rho-1+2\eta(\theta+1-\theta_1)}n^2|T|$, where $\theta_1\le\theta$ is the length of the short intervals, and they list them with the algorithm of \cite{HMNS17}. Their parameters and the degeneracy are incomparable. For instance, a complete graph present at every time step has closure number $1$ and degeneracy $n-1$, while $K_{2,N}$ has degeneracy $2$ and closure number $N+1$. Observe that their bound also bounds the number of maximal cliques output by our algorithms on these temporal graphs. Finally, a bound in terms of the vertex deletion distance to order preservation is implicit in the work of Hermelin et al.\ \cite{HIMN23}, as observed in \cite{DEMM25}.

To summarize, the analyses of \cite{HMNS17,BHM+19,BMT24} bound the work by the number of time-maximal cliques, and this number can exceed the number of maximal cliques by a factor exponential in the size of a clique. As a consequence, none of these algorithms is known to spend time polynomial in the input for each maximal clique it outputs. Moreover, the best known bound on the number of maximal cliques \cite{BHM+19} carries a factor $2^k$, while in static graphs the corresponding factor is $3^{k/3}$ \cite{ELS13}.

\paragraph*{Cliques maximal over the whole lifespan} Brunelli et al.\ \cite{BCGM25} recently gave an output sensitive algorithm for another notion of maximal clique. For them, a temporal clique is a set of vertices which is a clique of some instantaneous graph, and it is maximal if no strict superset of it is a clique of some instantaneous graph. Their solutions are thus vertex sets, without time interval. This family of sets is hereditary, which allows them to use the framework of Cohen et al.\ \cite{CKS08}. They consider single-interval temporal graphs, in which every pair interacts during at most one interval of time. In such a graph with $n$ vertices and $m$ edges, they list the $\alpha'$ maximal temporal cliques in time $O(\alpha'^2nq(q^3+\Delta))$ and space $O(mn)$, where $q$ is the maximum size of a temporal clique and $\Delta$ the maximum degree of the graph of all the pairs which interact. They also list the first $i$ of them in time $O(i^2q(q^3+\Delta))$, using space $O(mn+qi)$.

The two problems are related as follows. If $S$ is a solution of \cite{BCGM25}, then $S$ together with the interval during which it is a clique is a maximal clique in our sense. Thus $\alpha'\le\alpha$. The converse does not hold. For instance, in Figure~\ref{fig:stream}, $(\{a,b\},[1,8])$ is a maximal clique although $\{a,b\}$ is contained in the clique $\{a,b,c\}$. Moreover, $\alpha$ can be exponentially larger than $\alpha'$. Let $p\ge2$ and consider the complete $p$-partite graph whose parts have three vertices each. Give every edge of this graph the link $(0,2,u,v)$, and every pair inside a part the link $(1,2,u,v)$. Then the only solution of \cite{BCGM25} is the whole vertex set, while the link stream has $3^p+1$ maximal cliques. This example is the one used in \cite{BCGM25} to compare their solutions with the maximal cliques of the instantaneous graphs. Therefore neither result implies the other. Observe also that, in our sense, the maximality of a clique only depends on the links present during its interval. On the contrary, a set of vertices which interacts during a long period is not a solution of \cite{BCGM25} as soon as a larger set interacts once at some other time. Finally, the reduction from $\Delta$-cliques given in \cite{BCGM25} produces temporal graphs in which a pair may interact during several intervals, to which their algorithm does not directly apply.

Other dense structures have been studied in temporal graphs. Qin et al.\ studied periodic cliques, whose edges reappear at regular intervals \cite{QLW+19}, and Zhu et al.\ listed sets of $k$ time intervals that pairwise overlap within a query window, which they call temporal $k$-cliques \cite{ZFY+19}. Temporal cores \cite{WCL+15}, span-cores \cite{GBB+18}, persistent communities \cite{LSQ+18} and bursting cores \cite{QLY+22} have also been considered, and Zhao et al.\ listed the maximal $\Delta$-bicliques of temporal bipartite graphs \cite{ZCY+25}. The name temporal clique is also used for a temporal graph whose underlying graph is complete, in the study of temporal spanners \cite{CPS19}, which is a different object.

\begin{table}[t]
\centering\small
\renewcommand{\arraystretch}{1.2}
\begin{tabular}{@{}>{\raggedright}p{3.4cm}>{\raggedright}p{2.0cm}>{\raggedright}p{3.3cm}>{\raggedright\arraybackslash}p{5.3cm}@{}}
\toprule
Algorithm & Model & Setup & Enumeration\\
\midrule
Viard et al.\ \cite{VLM16} & $\Delta$-cliques & & $O(2^nn^2m^2(n+\log m))$\\
Himmel et al.\ \cite{HMNS17} & $\Delta$-cliques & & $O(3^{k/3}2^kn|T|m)$ and $O(xm+m|T|)$\\
Viard et al.\ \cite{VML18} & durations & & $O(2^nn^3m^2+2^nn^2\log m)$\\
Bentert et al.\ \cite{BHM+19} & $\Delta$-$k$-plexes & & $O(2^{k}\min(m,|T|)^2n^4)$ for cliques; FPT with respect to $k$, output sensitive in $x$\\
Molter et al.\ \cite{MNR19,MNR21} & isolated $\Delta$-cliques & & FPT with respect to the isolation parameter; not output sensitive\\
Banerjee and Pal \cite{BP19} & $(\Delta,\gamma)$-cliques & & $O(2^nn^2m^3+2^nn^3m^2)$\\
Banerjee and Pal \cite{BP21,BP22,BP24} & $(\Delta,\gamma)$-cliques & & bounds in terms of the input; experiments\\
Pal \cite{Pal24} & $(\Delta,\gamma)$-cliques & & $O(2^k3^{k/3}n|T|m+n^2|T|^2)$\\
Boekhout and Takes \cite{BT25} & weighted $(\Delta,\gamma)$-cliques, interactions at both ends & & first phase in $O(m)$; experiments\\
Baudin et al.\ \cite{BMT24} & durations & & $O(m3^{\Delta/3}2^q\Delta^2)$ and $O(\frac1r\Delta^2q\alpha)$, $1\le\frac1r\le 2^q$\\
Brunelli et al.\ \cite{BCGM25} & one interval per pair; vertex sets & & $O(\alpha'^2nq(q^3+\Delta))$, where $\alpha'\le\alpha$ is their number of solutions; first $i$ solutions in $O(i^2q(q^3+\Delta))$\\
\midrule
This paper$^{+}$, Thm.~\ref{thm:degree-enum} & durations & $O(m\log m)$ & $\alpha\,O(q^2\Delta^2(\Delta\log m+\Delta^2+\lambda(\Delta)))$\\
This paper$^{+}$, Thm.~\ref{thm:delay} & durations & $O(m\log m+mk^2\log^3n)$ & $\alpha\,O(k^5(k^2\log m+\lambda(k)))$\\
This paper (online), Thm.~\ref{thm:online} & durations & & $O(m\log m+mk\log^3n)+\alpha\,O(k^5(k^2\log n+\lambda(k)))$\\
\bottomrule
\end{tabular}

\smallskip
\raggedright\footnotesize
$\Delta$ = maximum degree, $k$ = degeneracy, $q$ = maximum clique size of an instantaneous graph (for $\Delta$-cliques, of a window graph), $\alpha$ = number of maximal cliques, $x$ = number of time-maximal cliques, $|T|$ = number of time steps, $\lambda(h)=O(h^3)$ = delay of maximal clique enumeration in an $h$-order graph. $+$: polynomial time delay algorithms, with time delay $O(\Delta^4(\Delta\log m+\Delta^2+\lambda(\Delta)))$ and $O(k^5(k^2\log m+\lambda(k)))$ respectively. The online algorithm has no setup phase, and its total running time is given.
\caption{Bounds for maximal clique enumeration in link streams and temporal graphs. The results are presented chronologically.}\label{tab:sota}
\end{table}

\begin{table}[t]
\centering\small
\renewcommand{\arraystretch}{1.2}
\begin{tabular}{@{}>{\raggedright}p{3.4cm}>{\raggedright}p{2.6cm}>{\raggedright}p{4.0cm}>{\raggedright\arraybackslash}p{4.0cm}@{}}
\toprule
Reference & Objects counted & Upper bound & Lower bound\\
\midrule
Himmel et al.\ \cite{HMNS17} & time-maximal $\Delta$-cliques & $3^{k/3}2^{k+1}n|T|$ & \\
Bentert et al.\ \cite{BHM+19} & time-maximal $\Delta$-cliques & $2^{k+1}n\min(m,|T|)$ & \\
Davot et al.\ \cite{DEMM25} & maximal $\Delta$-cliques, closure $\rho$, instability $\eta$ & $2^{O(\rho+\eta(\theta+1))}n^2|T|$ & \\
\midrule
This paper, Thm.~\ref{thm:degree-count} & maximal cliques & $O(m\Delta^23^{\Delta/3})$ & $\Omega(m3^{\Delta/3})$\\
This paper, Ex.~\ref{ex:quadratic} & maximal cliques & & $\Omega(m^2)$ with $k=2$\\
This paper, Thm.~\ref{thm:degeneracy-count} & maximal cliques & $O(n|\cT|k^23^{k/3})$ & $\Omega(n|\cT|3^{k/3})$\\
\bottomrule
\end{tabular}

\smallskip
\raggedright\footnotesize
$|\cT|$ = number of distinct end times of links, with $|\cT|\le\min(m,|T|)$ for $\Delta$-cliques. $|T|$ = number of time steps. For \cite{DEMM25}, $\theta$ is the length of the windows of the $\Delta$-cliques, and $\rho$ and $\eta$ are the closure and instability parameters. Every maximal clique is time-maximal, so the first two rows also bound the number of maximal $\Delta$-cliques.
\caption{Bounds on the number of maximal cliques of link streams and temporal graphs.}\label{tab:counts}
\end{table}

\paragraph*{Related questions} A different problem is to maintain the set of maximal cliques of a graph under edge insertions and deletions. Stix \cite{Sti04} and Ottosen and Vomlel \cite{OV10} gave update algorithms, and Das et al.\ gave the first change-sensitive algorithms, whose time is proportional to the number of maximal cliques that appear or disappear, together with nearly tight bounds on this number \cite{DST19}. Baudin et al.\ note that this approach can be adapted to link streams \cite{BMT24}. Observe however that the maximal cliques of a link stream are not the maximal cliques of its instantaneous graphs, since a vertex adjacent to all the vertices of a clique during part of its interval does not extend it. Inside one end time, our algorithms solve a problem of this incremental kind on a graph with $O(k)$ or $O(\Delta)$ vertices. We will also need to maintain orientations of the instantaneous graphs. An orientation has out-degree at most $c$ if every vertex has at most $c$ out-neighbors, and an acyclic orientation of out-degree at most $k$ exists if and only if the degeneracy is at most $k$. Brodal and Fagerberg maintained an orientation of out-degree $O(a)$ under edge insertions and deletions, where $a$ bounds the arboricity \cite{BF99}, Kopelowitz et al.\ obtained worst-case update times \cite{KKPS14}, and Christiansen and Rotenberg obtained out-degree $a+2$ \cite{CR22}. These orientations need not be acyclic. The level data structures of Bhattacharya et al.\ \cite{BHNT15} and Henzinger et al.\ \cite{HNW20} give acyclic orientations of out-degree $O(a)$, where $a$ is the current arboricity, with polylogarithmic amortized update time; Liu et al.\ stated this orientation explicitly \cite{LSY+22}.

\paragraph*{Our contributions} Let $\cL$ be a link stream with $n$ vertices and $m$ links, let $\cT$ be the set of distinct end times of its links, and let $\Delta$, $k$ and $q$ be the largest maximum degree, the largest degeneracy and the largest clique size of an instantaneous graph. We first prove that a maximal clique ending at time $e$ is determined by its start time $\tau$ and by a maximal clique of a static graph $\Gint{\tau}{e}$ whose edges are the pairs that interact throughout $[\tau,e]$. As $\tau$ increases, these graphs only gain edges. Therefore, all maximal cliques ending at $e$ can be found by sweeping $\tau$ over the start times of the links present at $e$, and by listing only maximal cliques of static graphs. This is what removes the factor exponential in the clique size of the previous algorithms.

Using this observation, we present a first output sensitive algorithm, which anchors every maximal clique at one of its links that ends at the end time of the clique. It has setup time $O(m\log m)$, amortized time delay $O(q^2\Delta^2(\Delta\log m+\Delta^2+\lambda(\Delta)))$, where $\lambda(h)$ is the delay of maximal clique enumeration in an $h$-order graph, and polynomial time delay $O(\Delta^4(\Delta\log m+\Delta^2+\lambda(\Delta)))$. For example, using the algorithm of Tsukiyama et al.\ \cite{TIAS77}, the delay is $O(\Delta^7+\Delta^5\log m)$. As a by-product, we prove that a link stream has $O(m\Delta^23^{\Delta/3})$ maximal cliques, and that this bound is tight up to a factor $O(\Delta^2)$. An online version of the algorithm outputs every maximal clique at its end time.

In a second part, we consider the degeneracy. We first observe that the number of maximal cliques cannot be bounded by $m$ times a function of $k$, since it can be quadratic in $m$ when $k=2$. We then prove that a link stream has $O(n|\cT|k^23^{k/3})$ maximal cliques and that this is tight up to a factor polynomial in $k$. This improves the factor $2^k$ of the bound of \cite{BHM+19}, and the factor $3^{k/3}2^k$ of the bound of \cite{HMNS17}, to $3^{k/3}$, which cannot be improved. We then present a second output sensitive algorithm, which anchors every maximal clique at its end time and at its first vertex in an ordering of the instantaneous graph in which every vertex has few later neighbors, as in the decomposition of \cite{Man19}. A candidate found in the neighborhood of an anchor may fail to be maximal because of a vertex that comes earlier in the ordering. We show that such a candidate is a proper suffix of a maximal clique that has already been output, so that it can be rejected by a dictionary lookup, without inspecting the neighborhood of the anchor. Finally, we maintain the orderings over time with the level data structures of \cite{BHNT15,HNW20}. The resulting algorithm has setup time $O(m\log m+mk^2\log^3n)$ and polynomial time delay $O(k^5(k^2\log m+\lambda(k)))$, which does not depend on the maximum degree. It can also be run online, in which case it outputs every maximal clique at its end time in total time $O(m\log m+mk\log^3n)+\alpha O(k^5(k^2\log n+\lambda(k)))$, where $\alpha$ is the number of maximal cliques. When the output dominates the setup phase, its worst-case running time matches the largest possible number of maximal cliques up to a factor $\poly(k)\log m$. All our results apply to the $\Delta$-cliques of temporal graphs, with $k$ equal to their $\Delta$-slice degeneracy. To the best of our knowledge, these are the first algorithms with polynomial time delay for listing the maximal cliques of link streams, in which both the vertex set and the interval are maximal. This problem differs from the enumeration of the vertex sets which are maximal over all times \cite{BCGM25}, and from enumeration under closure and stability assumptions \cite{DEMM25}.

The organization of the document is as follows. In Section~\ref{sec:defs}, we introduce some notations and definitions. In Section~\ref{sec:basic}, we prove basic results. These results are used in Sections~\ref{sec:degree} and~\ref{sec:degeneracy} for the proofs of the main contributions of the paper. The first algorithm and the bounds in terms of the maximum degree are presented in Theorems~\ref{thm:edge-correct}, \ref{thm:degree-count} and~\ref{thm:degree-enum}. The bounds in terms of the degeneracy are proved in Theorem~\ref{thm:degeneracy-count}, and the second algorithm in Theorems~\ref{thm:vertex-correct}, \ref{thm:vertex-given}, \ref{thm:online} and~\ref{thm:delay}. In Section~\ref{sec:delta}, we apply our results to $\Delta$-cliques. We conclude in Section~\ref{sec:conclusion}.

\section{Definitions}\label{sec:defs}

We consider graphs of the form $G=(V,E)$ which are simple and undirected. If $X\subseteq V$, the subgraph of $G$ induced by $X$ is denoted by $G[X]$. The set $N_G(x)$, or $N(x)$ when not ambiguous, is the open neighborhood of the vertex $x$ and consists of the vertices adjacent to $x$ in $G$. A clique is a set of pairwise adjacent vertices, and it is maximal if it is not strictly contained in another clique. A graph on $h$ vertices has at most $3^{h/3}$ maximal cliques \cite{MM65}. By $[n]$ we denote the set of integers $\{1,2,\dots,n\}$.

A graph is $k$-degenerate if there is an ordering $v_1,\dots,v_n$ of its vertices in which every vertex $v_i$ has at most $k$ neighbors in $\{v_{i+1},\dots,v_n\}$, and its degeneracy is the smallest such $k$. Such an ordering is called a degeneracy ordering and can be computed in linear time \cite{MB83}. Given an ordering $\sigma$ of the vertices, $\sigma(x)$ is the rank of the vertex $x$ (the index of its position) in $\sigma$.

In our algorithms, we will need to consider the vertices of some cliques as strings. A string $s$ is a suffix of a string $t$ if there exists a string $u$ such that $t=us$. If $u$ is not empty, $s$ is called a proper suffix of $t$.

\begin{definition}
A link stream is a triple $\cL=(T,V,E)$ where $T\subseteq\mathbb{R}$ is an interval, $V$ is a finite set of $n$ vertices and $E$ is a finite set of $m$ links $(b,e,u,v)$ with $b,e\in T$, $b\le e$ and $u\neq v$. The pair $\{u,v\}$, written $uv$, is unordered.
\end{definition}

As in \cite{BMT24}, we assume that the link stream is simple, that is, two links of the same pair have disjoint intervals. Links of the same pair whose intervals intersect can always be merged into one link, since this does not change the times at which the pair interacts. In a simple link stream, a pair interacts at every time of an interval if and only if one of its links contains this interval. We also assume that there are no isolated vertices, so that $n\le 2m$.

\begin{definition}
For a time $t$, the instantaneous graph $G_t$ has vertex set $V$ and an edge $uv$ whenever some link $(b,e,u,v)$ satisfies $b\le t\le e$. For an interval $[t_0,t_1]$, the graph $\Gint{t_0}{t_1}$ has vertex set $V$ and an edge $uv$ whenever some link of $uv$ contains $[t_0,t_1]$.
\end{definition}

If $uv$ is an edge of $G_t$, the link of $uv$ containing $t$ is unique since the stream is simple, and we denote its start and its end by $b_t(uv)$ and $e_t(uv)$.

\begin{definition}
A clique of $\cL$ is a pair $(C,[t_0,t_1])$ with $|C|\ge 2$ such that $C$ is a clique of $\Gint{t_0}{t_1}$. A clique $(C,I)$ is maximal if there is no other clique $(C',I')$ with $C\subseteq C'$ and $I\subseteq I'$.
\end{definition}

This is equivalent to requiring that the interval cannot be extended and that no vertex can be added, which are called time-maximality and vertex-maximality in \cite{BMT24}. We denote by $\alpha$ the number of maximal cliques of $\cL$, by $\cT$ the set of distinct end times of links, and, for $e\in\cT$, by $P_e$ the set of pairs having a link that ends at $e$, called the ending pairs at $e$. We let $\Delta$, $k$ and $q$ be the largest maximum degree, the largest degeneracy and the largest clique size of a graph $G_t$, $t\in T$. These maxima are taken over finitely many graphs, since $G_t$ only changes at the start and end times of links, and $q\le k+1\le\Delta+1$.

We use the RAM model, and an algorithm must output every solution explicitly. As in \cite{Man19}, the time complexity of an enumeration algorithm is divided into a setup phase followed by an enumeration phase. Let $S_1,\dots,S_\alpha$ be the solutions in the order in which they are output, let $T(i)$ be the number of steps performed after the setup phase and before $S_i$ is output, for $i\in[\alpha]$, and let $T(\alpha+1)$ be the number of steps performed after the setup phase until the algorithm stops. Set $T(0)=0$. The algorithm has time delay $p$ if $T(i+1)-T(i)\le p$ for all $0\le i\le\alpha$, and amortized time delay $p$ if $T(i)\le ip$ for all $i\in[\alpha+1]$, following \cite{CS23}. The next lemma is folklore; it goes back to Uno \cite{Uno15} and is analysed in \cite{CS23}.

\begin{lemma}\label{lem:buffer}
An algorithm with amortized time delay $p$ whose solutions have size at most $p$, where $p$ is known at the end of the setup phase, can be turned into an algorithm with time delay $O(p)$ which outputs the same solutions after the same setup phase. The new algorithm may require space proportional to the total size of the solutions.
\end{lemma}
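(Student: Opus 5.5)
The plan is the standard buffering simulation. We run the original algorithm $A$ in the background and release its solutions through a FIFO queue at a regular pace. We commit to the following schedule. After the setup phase, $p$ is known. Each \emph{round} of the new algorithm $B$ does two things: it simulates $c\cdot p$ steps of $A$, for a constant $c\ge 1$ fixed below, and then outputs one solution from the queue. Whenever $A$ outputs a solution $S_i$, $B$ copies it into the queue instead of printing it; this costs $O(|S_i|)=O(p)$ per solution. When $A$ halts, $B$ flushes the remaining queue at one solution per $O(p)$ steps. Copying a solution costs $O(p)$ and at most $cp$ steps of $A$ run per round, so each round costs $O(p)$ plus the copying cost of the solutions produced in it. This last term needs care.

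The main point is that the queue is never empty when a round wants to output, unless $A$ has already finished. Consider round $j$. By then $B$ has simulated $jcp$ steps of $A$. Amortized delay gives $T(j)\le jp\le jcp$, so $A$ has produced at least $j$ solutions by this time, or has halted, since $T(\alpha+1)\le(\alpha+1)p$. Only $j-1$ solutions were output in earlier rounds, so the queue contains one. If $A$ has halted, we are in the flushing phase, where every output costs $O(p)$. The final stop also happens within $O(p)$ steps of the last output.

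The obstacle is the copying cost. Many solutions might be produced inside one round, and copying all of them would then exceed $O(p)$ in that round. We handle this by counting copy operations as steps of the simulation budget. Writing $S_i$ takes $A$ itself at least $|S_i|$ steps, because output is explicit. So copying into the queue multiplies $A$'s step count by at most a constant factor. Concretely, we simulate $A$ with its output instructions replaced by queue writes, each write costing $O(1)$ per symbol, exactly like the original. The simulated algorithm $A'$ then still has amortized delay $O(p)$. Running the argument of the previous paragraph with $c$ a sufficiently large constant gives delay $O(p)$. The first solution may need to wait one round, which is within $O(p)$, so this also covers $T(1)-T(0)$.

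Finally, the space bound. The queue holds at most all the solutions, so $B$ needs extra space proportional to their total size. The setup phase is unchanged, and $B$ outputs the same solutions as $A$, in the same order.
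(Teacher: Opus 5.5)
Your proposal is correct, and it is the standard queue-buffering argument the paper relies on. The paper states the lemma as folklore, citing Uno and Capelli--Strozecki, and does not prove it; your version simulates $\Theta(p)$ steps of the original algorithm per round, pushes its solutions into a FIFO queue, pops one solution per round, and uses $T(j)\le jp$ to show the queue is nonempty while the simulation runs. Two points are glossed over, and both are absorbed by taking the constant $c$ large: $S_j$ is completely written only by step $T(j)+|S_j|\le (j+1)p\le 2jp$, and copying the first $j$ solutions costs $O(jp)$ since every solution has size at most $p$.
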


We will obtain amortized time delays with the following lemma. An algorithm examines a sequence of candidates; some of them are output and the others are rejected. A charging map sends every candidate to a solution.

\begin{lemma}\label{lem:accounting}
Assume that, after its setup phase, an algorithm examines candidates and that there are integers $\kappa$ and $\mu$ such that (i) at most $\kappa$ steps are performed before the first candidate is examined, between two consecutive candidates, and after the last one; (ii) every candidate is sent by a charging map to a solution that is output when this candidate is examined or before; (iii) at most $\mu$ candidates are sent to each solution. Then the algorithm has amortized time delay $(\mu+1)\kappa$.
\end{lemma}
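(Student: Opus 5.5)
We prove Lemma~\ref{lem:accounting} by bounding, for every $i\in[\alpha+1]$, the number of steps $T(i)$ performed after the setup phase and before the $i$-th solution is output (or before the algorithm stops, when $i=\alpha+1$). The approach is to count candidates. Let $c_1,c_2,\dots$ be the candidates in the order in which they are examined. Fix $i$, and let $j$ be the number of candidates whose examination starts before $S_i$ is output. For $i=\alpha+1$, $j$ is the total number of candidates.

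\emph{Time in terms of candidates.} By (i), at most $\kappa$ steps are spent before $c_1$, between consecutive candidates, and after the last one. We read (i) as also covering the work of examining a candidate, including a possible output, so that this work is part of the $\kappa$ steps attached to each candidate. Under this reading, the steps before $S_i$ is output split into at most $j+1$ stretches of at most $\kappa$ steps each. Hence $T(i)\le (j+1)\kappa$.

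\emph{Bounding the number of candidates.} The candidates $c_1,\dots,c_{j-1}$ were fully examined before $S_i$ is output. By (ii), each of them is sent to a solution output no later than its own examination, hence to one of $S_1,\dots,S_{i-1}$. The candidate $c_j$ is the one during whose examination $S_i$ is output, if any; it is sent to a solution among $S_1,\dots,S_i$. By (iii), each solution receives at most $\mu$ candidates, so $j\le \mu i$. For $i=\alpha+1$, every candidate is sent to one of the $\alpha$ solutions, so $j\le \mu\alpha$.

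\emph{Conclusion.} Combining the two parts gives $T(i)\le(\mu i+1)\kappa\le(\mu+1)i\kappa$ for $i\ge1$, since $1\le i$. This is the claimed amortized delay $(\mu+1)\kappa$.

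The only delicate step is the bookkeeping around the candidate during which $S_i$ is output, together with the interpretation of (i) that the examination itself costs $O(\kappa)$. This boundary is exactly what the $+1$ in $(\mu+1)$ absorbs. If instead a solution could be output between candidates, the same count still gives $j\le\mu(i-1)$ candidates and $j+1$ stretches, which yields the same bound.
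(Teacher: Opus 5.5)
Your proof is correct and follows the same route as the paper's: by (ii) and (iii), at most $i\mu$ candidates are examined up to the output of $S_i$ (at most $\alpha\mu$ in total), and (i) converts this count into a number of steps. The only difference is bookkeeping. The paper counts $j$ stretches and gets $T(i)\le i\mu\kappa$ for $i\le\alpha$, using the extra $+1$ only for $T(\alpha+1)\le(\alpha\mu+1)\kappa$. You count $j+1$ stretches throughout and let the $+1$ in $(\mu+1)$ absorb the extra one, which is equally valid.
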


\begin{proof}
Let $i\in[\alpha]$. By (ii), every candidate examined until $S_i$ is output is sent to one of $S_1,\dots,S_i$. By (iii), this implies that there are at most $i\mu$ such candidates, and thus $T(i)\le i\mu\kappa$ by (i). In the same way, the total number of candidates is at most $\alpha\mu$, and thus $T(\alpha+1)\le(\alpha\mu+1)\kappa\le(\alpha+1)(\mu+1)\kappa$.
\end{proof}

\section{Basic results}\label{sec:basic}

We want to show in this section that, roughly, a maximal clique of a link stream is determined by its end time, by its start time, and by a maximal clique of a static graph. Let $C$ be a clique of $G_t$ with $|C|\ge 2$. We define the start and the end of $C$ at time $t$ as
\[
s_t(C)=\max_{x,y\in C,\ x\neq y} b_t(xy)\qquad\text{and}\qquad f_t(C)=\min_{x,y\in C,\ x\neq y} e_t(xy).
\]

\begin{lemma}\label{lem:window}
Let $C$ be a clique of $G_t$ with $|C|\ge 2$ and let $I=[t_0,t_1]$ with $t_0\le t\le t_1$. Then $C$ is a clique of $G_I$ if and only if $I\subseteq[s_t(C),f_t(C)]$.
\end{lemma}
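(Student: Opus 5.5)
The plan is to reduce the statement to a pairwise fact and then take the intersection over all pairs of $C$. Fix distinct $x,y\in C$. Since $xy$ is an edge of $G_t$, there is a unique link $\ell_{xy}$ of the pair $xy$ that contains $t$, namely the one with interval $[b_t(xy),e_t(xy)]$. The claim for a single pair is the following: $xy$ is an edge of $G_I$ if and only if $I\subseteq[b_t(xy),e_t(xy)]$.

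For the pairwise claim, the backward direction is immediate. If $I$ lies in the interval of $\ell_{xy}$, then $\ell_{xy}$ contains $I$, so $xy$ is an edge of $G_I$ by definition.

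For the forward direction, suppose some link $\ell$ of $xy$ contains $I$. Since $t_0\le t\le t_1$, we have $t\in I$, so $\ell$ also contains $t$. The stream is simple, so two links of the same pair have disjoint intervals. Both $\ell$ and $\ell_{xy}$ contain $t$, hence $\ell=\ell_{xy}$ and $I\subseteq[b_t(xy),e_t(xy)]$. This uniqueness step is the only point where the hypothesis $t\in I$ and simplicity are used. It is also the only potential pitfall: without it, $I$ could be covered by a different link of $xy$ that does not contain $t$.

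To conclude, $C$ is a clique of $G_I$ if and only if every pair $xy$ of $C$ is an edge of $G_I$. By the pairwise claim, this holds if and only if $I\subseteq\bigcap_{x\neq y}[b_t(xy),e_t(xy)]$. This intersection is exactly $[\max b_t(xy),\min e_t(xy)]=[s_t(C),f_t(C)]$, which is nonempty because it contains $t$. The requirement $|C|\ge2$ only ensures that the max and min range over a nonempty set of pairs.
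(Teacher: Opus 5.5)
Your proof is correct and follows essentially the same route as the paper. Like the paper, you use simplicity and $t\in I$ to show that any link of $xy$ containing $I$ is the link $[b_t(xy),e_t(xy)]$ containing $t$, and then you intersect these intervals over all pairs of $C$ to obtain $[s_t(C),f_t(C)]$.
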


\begin{proof}
Let $x,y\in C$ be distinct. A link of $xy$ that contains $I$ contains $t$. Since the stream is simple, this implies that it is the link of $xy$ containing $t$, whose interval is $[b_t(xy),e_t(xy)]$. Thus $xy$ is an edge of $G_I$ if and only if $I\subseteq[b_t(xy),e_t(xy)]$, and the claim follows by considering all the pairs of $C$.
\end{proof}

Given a time $e$ and a threshold $\tau\le e$, Lemma~\ref{lem:window} applied to pairs implies that the edges of $\Gint{\tau}{e}$ are exactly the edges $xy$ of $G_e$ with $b_e(xy)\le\tau$. Therefore, for $\tau'\le\tau\le e$, graph $\Gint{\tau'}{e}$ is obtained from $\Gint{\tau}{e}$ by deleting edges. In other words, the graphs $\Gint{\tau}{e}$ gain edges as $\tau$ increases.

\begin{lemma}\label{lem:charact}
Let $e$ be a time, $\tau\le e$ and $C\subseteq V$ with $|C|\ge 2$. Then $(C,[\tau,e])$ is a maximal clique of $\cL$ if and only if (a) $C$ is a maximal clique of $\Gint{\tau}{e}$, (b) $s_e(C)=\tau$, and (c) $f_e(C)=e$, that is, some pair of $C$ is an ending pair at $e$. Conditions (b) and (c) are well defined under (a), since (a) implies that $C$ is a clique of $G_e$.
\end{lemma}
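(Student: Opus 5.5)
The plan is to reduce everything to Lemma~\ref{lem:window}, applied at the time $t=e$. This turns cliques of $\cL$ whose interval contains $e$ into cliques $C$ of $G_e$ together with intervals inside $[s_e(C),f_e(C)]$. Throughout, I use the observation after Lemma~\ref{lem:window}: the edges of $\Gint{\tau}{e}$ are the edges $xy$ of $G_e$ with $b_e(xy)\le\tau$. In particular, a clique $C'$ of $\Gint{\tau}{e}$ is a clique of $G_e$ with $s_e(C')\le\tau$, and the converse holds as well.

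For the forward direction, assume $(C,[\tau,e])$ is maximal.

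\begin{itemize}
\item Since $C$ is a clique of $\Gint{\tau}{e}$, Lemma~\ref{lem:window} gives $[\tau,e]\subseteq[s_e(C),f_e(C)]$.
\item If $s_e(C)<\tau$, then $(C,[s_e(C),e])$ is a clique by Lemma~\ref{lem:window}, and it strictly enlarges the interval. This contradicts maximality, so (b) holds.
\item Symmetrically, if $f_e(C)>e$, then $(C,[\tau,f_e(C)])$ is a larger clique, so (c) holds. The equality $f_e(C)=e$ means that the minimum of the $e_e(xy)$ is attained at $e$, that is, some pair of $C$ has its link ending at $e$.
\item For (a), if $C\subsetneq C'$ with $C'$ a clique of $\Gint{\tau}{e}$, then $(C',[\tau,e])$ is a clique of $\cL$ strictly containing $(C,[\tau,e])$, again a contradiction.
\end{itemize}

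For the converse, assume (a), (b) and (c), and let $(C',I')$ be a clique with $C\subseteq C'$ and $[\tau,e]\subseteq I'$. We must show $C'=C$ and $I'=[\tau,e]$.

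\begin{itemize}
\item Since $e\in I'$, Lemma~\ref{lem:window} at $t=e$ says that $C'$ is a clique of $G_e$ and that $I'\subseteq[s_e(C'),f_e(C')]$.
\item As $I'\supseteq[\tau,e]$, we get $s_e(C')\le\tau$. So every pair of $C'$ has $b_e(xy)\le\tau$, and $C'$ is a clique of $\Gint{\tau}{e}$. By (a), $C'=C$.
\item Then $I'\subseteq[s_e(C),f_e(C)]=[\tau,e]$ by (b) and (c), so $I'=[\tau,e]$.
\end{itemize}

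The step needing most care is this converse: $C'$ must be shown to be a clique of $\Gint{\tau}{e}$, not merely of $\Gint{I'}{}$. Here the monotonicity in $\tau$ and the simplicity of the stream enter, through Lemma~\ref{lem:window}. I would also note explicitly that (a) forces $C$ to be a clique of $G_e$, since $\Gint{\tau}{e}\subseteq G_e$, so that $s_e$ and $f_e$ are defined. Finally, the phrasing ``some pair of $C$ is an ending pair at $e$'' is equivalent to $f_e(C)=e$ because every $e_e(xy)\ge e$.
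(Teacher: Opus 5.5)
Your proof is correct and follows the paper's argument. Both directions reduce to Lemma~\ref{lem:window} at $t=e$: the forward direction enlarges either the interval to $[s_e(C),f_e(C)]$ or the vertex set inside $\Gint{\tau}{e}$, and the converse traps any larger clique $(C',I')$.

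The only difference is the order of steps in the converse. The paper first gets $I'=[\tau,e]$ by applying Lemma~\ref{lem:window} to $C$, which is a clique of $G_{I'}$, and then deduces $C'=C$ by (a); you get $C'=C$ first and $I'$ afterwards. Also, the fact that $C'$ is a clique of $\Gint{\tau}{e}$ is immediate, since every link containing $I'\supseteq[\tau,e]$ contains $[\tau,e]$, so that step needs less care than you suggest.
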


\begin{proof}
Assume first that $(C,[\tau,e])$ is a maximal clique. Then $C$ is a clique of $\Gint{\tau}{e}$, hence of $G_e$, and by Lemma~\ref{lem:window}, $(C,[s_e(C),f_e(C)])$ is a clique which contains $(C,[\tau,e])$. By maximality the two intervals are equal, which gives (b) and (c). If $C$ was not a maximal clique of $\Gint{\tau}{e}$, then a larger clique $C'$ of $\Gint{\tau}{e}$ would give a clique $(C',[\tau,e])$ containing $(C,[\tau,e])$, a contradiction. This gives (a).

Conversely, assume (a), (b) and (c). Then $(C,[\tau,e])$ is a clique. Let $(C',I')$ be a clique with $C\subseteq C'$ and $[\tau,e]\subseteq I'$. Observe that $C$ is a clique of $G_{I'}$ and that $e\in I'$. Thus Lemma~\ref{lem:window} implies that $I'\subseteq[s_e(C),f_e(C)]=[\tau,e]$, and therefore $I'=[\tau,e]$. This implies that $C'$ is a clique of $\Gint{\tau}{e}$ containing $C$, and thus $C'=C$ by (a).
\end{proof}

Therefore, a maximal clique ending at $e$ is a clique $C$ of $G_e$ containing an ending pair, together with the threshold $s_e(C)$, such that $C$ is maximal in $\Gint{s_e(C)}{e}$. Figure~\ref{fig:stream} gives an example.

\begin{figure}[t]
\centering
\begin{tikzpicture}[x=0.95cm,y=0.55cm,every node/.style={font=\small}]
  \draw[->] (0,0) -- (10.6,0) node[right] {time};
  \foreach \t in {0,...,10} { \draw (\t,0.1) -- (\t,-0.1) node[below] {\t}; }
  \fill[blue!10] (2,0.4) rectangle (5,5.1);
  \node[left] at (0,1) {$cd$}; \draw[very thick] (3,1) -- (9,1);
  \node[left] at (0,2) {$bc$}; \draw[very thick] (0,2) -- (5,2);
  \node[left] at (0,3) {$ac$}; \draw[very thick] (2,3) -- (6,3);
  \node[left] at (0,4) {$ab$}; \draw[very thick] (1,4) -- (8,4);
  \draw[dashed] (2,0) -- (2,5.3) node[above] {$\tau=s_5(C)$};
  \draw[dashed] (5,0) -- (5,5.3) node[above] {$e=f_5(C)$};
\end{tikzpicture}
\caption{A link stream with one row per pair, in which each bar is a link. Its maximal cliques are $(\{a,b,c\},[2,5])$, $(\{a,b\},[1,8])$, $(\{a,c\},[2,6])$, $(\{b,c\},[0,5])$ and $(\{c,d\},[3,9])$. For $C=\{a,b,c\}$, the link of $bc$ ends at $e=5$ and the link of $ac$ starts at $\tau=2$, and $C$ is a maximal clique of $\Gint{2}{5}$. The set $\{b,c\}$ is a maximal clique of $\Gint{0}{5}$ with $s_5(\{b,c\})=0$.}\label{fig:stream}
\end{figure}

\begin{lemma}\label{lem:shadow}
Let $e$ be a time, $W\subseteq V$ and $\tau\le e$, and let $C\subseteq W$ be a maximal clique of $\Gint{\tau}{e}[W]$ with $|C|\ge 2$. Then $s_e(C)\le\tau$ and $C$ is a maximal clique of $\Gint{s_e(C)}{e}[W]$.
\end{lemma}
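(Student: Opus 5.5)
The plan is to use the edge description of $\Gint{\tau}{e}$ given after Lemma~\ref{lem:window}: for $\tau\le e$, the edges of $\Gint{\tau}{e}$ are exactly the edges $xy$ of $G_e$ with $b_e(xy)\le\tau$. The same holds after restricting to $W$. The idea is that $\Gint{s_e(C)}{e}[W]$ is a subgraph of $\Gint{\tau}{e}[W]$ that still contains $C$, so maximality passes down.

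First, the inequality $s_e(C)\le\tau$. Since $C$ is a clique of $\Gint{\tau}{e}[W]$, it is a clique of $\Gint{\tau}{e}$. Every pair $xy$ of $C$ is therefore an edge of $G_e$ with $b_e(xy)\le\tau$. In particular $s_e(C)$ is well defined, and taking the maximum over the pairs of $C$ gives $s_e(C)\le\tau$. Equivalently, Lemma~\ref{lem:window} applied with $t=e$ and $I=[\tau,e]$ gives $[\tau,e]\subseteq[s_e(C),f_e(C)]$.

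Second, $C$ is a clique of $\Gint{s_e(C)}{e}[W]$. For every pair $xy$ of $C$ we have $b_e(xy)\le s_e(C)\le e$ by definition of $s_e(C)$. Hence $xy$ is an edge of $\Gint{s_e(C)}{e}$, and $C\subseteq W$.

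Third, maximality. Let $\tau'=s_e(C)\le\tau$. By the monotonicity observation after Lemma~\ref{lem:window}, $\Gint{\tau'}{e}$ is obtained from $\Gint{\tau}{e}$ by deleting edges. Restricting both graphs to $W$ preserves this, so $\Gint{\tau'}{e}[W]$ is a spanning subgraph of $\Gint{\tau}{e}[W]$. Suppose $C'\supsetneq C$ were a clique of $\Gint{\tau'}{e}[W]$. Then $C'$ would also be a clique of $\Gint{\tau}{e}[W]$, contradicting the maximality of $C$ there. Hence $C$ is maximal in $\Gint{s_e(C)}{e}[W]$.

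I do not expect a real obstacle here. The only point needing care is that $b_e(xy)$ is well defined, which requires each pair of $C$ to be an edge of $G_e$. This follows from the first step because $\tau\le e$, and it is also why the hypothesis $|C|\ge 2$ is needed for $s_e(C)$ to make sense.
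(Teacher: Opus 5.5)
Your proposal is correct and follows essentially the same argument as the paper. You bound $s_e(C)$ by $\tau$ pair by pair, note that $C$ is still a clique at the threshold $s_e(C)$, and transfer maximality because $\Gint{s_e(C)}{e}[W]$ is obtained from $\Gint{\tau}{e}[W]$ by deleting edges.
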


\begin{proof}
Every pair $xy$ of $C$ is an edge of $\Gint{\tau}{e}$, so $b_e(xy)\le\tau$, which implies $s_e(C)\le\tau$. Every pair $xy$ of $C$ also satisfies $b_e(xy)\le s_e(C)$, so $C$ is a clique of $\Gint{s_e(C)}{e}[W]$. Since this graph is obtained from $\Gint{\tau}{e}[W]$ by deleting edges, a clique of it which contains $C$ is a clique of $\Gint{\tau}{e}[W]$, and is thus equal to $C$.
\end{proof}

\section{Algorithms parameterized by the maximum degree}\label{sec:degree}

In this section we anchor every maximal clique at one of its links. We fix an arbitrary total order on the pairs of vertices. The canonical pair of a maximal clique $(C,[\tau,e])$ is the smallest pair of $C$ which is an ending pair at $e$; it exists by Lemma~\ref{lem:charact}(c). Its canonical link is the link of this pair which ends at $e$.

\begin{definition}\label{def:edge}
Let $\ell=(b,e,u,v)$ be a link. The set $H_\ell=N_{G_e}(u)\cap N_{G_e}(v)$ is the set of common neighbors of $u$ and $v$ at time $e$. The set $\Theta_\ell$ is the set of values $b_e(xy)$ over the edges $xy$ of $G_e$ with $x,y\in H_\ell\cup\{u,v\}$ and $b_e(xy)\ge b$. For $\tau\in\Theta_\ell$, graph $H^\tau_\ell$ is the subgraph of $\Gint{\tau}{e}$ induced by the common neighbors of $u$ and $v$ in $\Gint{\tau}{e}$.
\end{definition}

Observe that $b\in\Theta_\ell$, since $b_e(uv)=b$. The outline of the first algorithm is the following. We consider every link $\ell=(b,e,u,v)$, in increasing order of $e$ and then of the pair $uv$. For every threshold $\tau\in\Theta_\ell$, in increasing order, we find all the maximal cliques $K$ of $H^\tau_\ell$. For each of them, the set $C=K\cup\{u,v\}$ is a candidate. It is output with the interval $[\tau,e]$ if $s_e(C)=\tau$ and if $uv$ is its canonical pair, and rejected otherwise. The procedure is described in Algorithm~\ref{alg:edge}.

\begin{algorithm}[t]
\caption{Maximal cliques anchored at ending links}\label{alg:edge}
\KwData{A simple link stream $\cL$.}
\KwResult{All the maximal cliques of $\cL$.}
Sort the links by end time and then by pair\;
\For{every link $\ell=(b,e,u,v)$ in this order}{
  \For{every $\tau\in\Theta_\ell$ in increasing order}{
    \For{every maximal clique $K$ of $H_\ell^\tau$ (the empty set if $H^\tau_\ell$ has no vertex)}{
      $C\gets K\cup\{u,v\}$\;
      \eIf{$s_e(C)=\tau$ and $uv$ is the canonical pair of $C$}{Output $(C,[\tau,e])$.}{Reject it.}
    }
  }
}
\end{algorithm}

\begin{lemma}\label{lem:edge-cand}
For every candidate $C$ examined for the link $\ell=(b,e,u,v)$ and the threshold $\tau$, $C$ is a maximal clique of $\Gint{\tau}{e}$ and $(C,[s_e(C),e])$ is a maximal clique of $\cL$.
\end{lemma}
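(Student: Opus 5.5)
We plan to show first that $C=K\cup\{u,v\}$ is a maximal clique of $\Gint{\tau}{e}$, and then to apply Lemma~\ref{lem:shadow} with $W=V$ and Lemma~\ref{lem:charact}.

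\emph{Step 1: $C$ is a clique of $\Gint{\tau}{e}$.} Since $\tau\in\Theta_\ell$, we have $\tau\ge b=b_e(uv)$. By the remark after Lemma~\ref{lem:window}, this means $uv$ is an edge of $\Gint{\tau}{e}$. Every vertex of $K$ is, by definition of $H^\tau_\ell$, a common neighbor of $u$ and $v$ in $\Gint{\tau}{e}$. Moreover, $K$ is a clique of $H^\tau_\ell$, which is an induced subgraph of $\Gint{\tau}{e}$. Hence $C$ is a clique of $\Gint{\tau}{e}$, and $|C|\ge 2$. This also covers the case $K=\emptyset$, where $H^\tau_\ell$ has no vertex.

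\emph{Step 2: maximality of $C$ in $\Gint{\tau}{e}$.} Suppose $w\notin C$ is adjacent in $\Gint{\tau}{e}$ to every vertex of $C$. Then $w$ is adjacent to $u$ and $v$, so $w$ is a vertex of $H^\tau_\ell$. Also $w$ is adjacent to all of $K$ in $H^\tau_\ell$, because that graph is induced. This contradicts the maximality of $K$ in $H^\tau_\ell$. When $K=\emptyset$ the same argument shows that no such $w$ exists, since $H^\tau_\ell$ is empty.

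\emph{Step 3: conclusion.} Lemma~\ref{lem:shadow} with $W=V$ gives $s_e(C)\le\tau\le e$ and shows that $C$ is a maximal clique of $\Gint{s_e(C)}{e}$. We then apply Lemma~\ref{lem:charact} with threshold $s_e(C)$:
\begin{itemize}
\item condition (a) is what we just obtained;
\item condition (b) holds trivially, since the threshold is $s_e(C)$ itself;
\item condition (c) holds because $\ell$ ends at $e$, so $e_e(uv)=e$, and every pair of $C$ is present at $e$ with $e_e\ge e$; hence $f_e(C)=e$.
\end{itemize}
Therefore $(C,[s_e(C),e])$ is a maximal clique of $\cL$.

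The only delicate point is Step 2's use of the induced-subgraph definition and the empty case. Everything else follows directly from the earlier lemmas.
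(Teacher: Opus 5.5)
Your proposal is correct and follows the paper's proof step for step. You show that $C$ is a clique of $\Gint{\tau}{e}$ because $b\le\tau$, and that it is maximal because any extending vertex would be a common neighbor of $u$ and $v$ extending $K$ in $H^\tau_\ell$. You then conclude with Lemma~\ref{lem:shadow} for $W=V$ and with Lemma~\ref{lem:charact}, using $f_e(C)=e$ from the link $\ell$.
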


\begin{proof}
Since $b\le\tau$, the pair $uv$ is an edge of $\Gint{\tau}{e}$, and thus $C$ is a clique of $\Gint{\tau}{e}$. Assume by contradiction that a vertex $z\notin C$ is adjacent in $\Gint{\tau}{e}$ to all the vertices of $C$. Then $z$ is a common neighbor of $u$ and $v$ in $\Gint{\tau}{e}$, hence a vertex of $H^\tau_\ell$ adjacent to all the vertices of $K$. This contradicts the maximality of $K$, and $C$ is a maximal clique of $\Gint{\tau}{e}$. By Lemma~\ref{lem:shadow} with $W=V$, it is also a maximal clique of $\Gint{s_e(C)}{e}$. Moreover $f_e(C)=e$ since the link of $uv$ ends at $e$, and Lemma~\ref{lem:charact} concludes.
\end{proof}

\begin{theorem}\label{thm:edge-correct}
Given a simple link stream $\cL$, Algorithm~\ref{alg:edge} outputs exactly all its maximal cliques, without duplication.
\end{theorem}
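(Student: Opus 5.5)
The proof has three parts: soundness, completeness, and absence of duplicates.

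\emph{Soundness.} Every output is a maximal clique. Suppose $(C,[\tau,e])$ is output while processing link $\ell=(b,e,u,v)$ and threshold $\tau$. By Lemma~\ref{lem:edge-cand}, $(C,[s_e(C),e])$ is a maximal clique of $\cL$. The output test guarantees $s_e(C)=\tau$, so $(C,[\tau,e])$ itself is maximal.

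\emph{Completeness.} Let $(C,[\tau,e])$ be a maximal clique, let $uv$ be its canonical pair, and let $\ell=(b,e,u,v)$ be its canonical link. I will show that $\ell$, $\tau$ and $K=C\setminus\{u,v\}$ are examined together.
\begin{itemize}
\item By Lemma~\ref{lem:charact}, $C$ is a maximal clique of $\Gint{\tau}{e}$, $s_e(C)=\tau$ and $f_e(C)=e$.
\item $\tau\in\Theta_\ell$. Since $\tau=s_e(C)$, there is a pair $xy$ of $C$ with $b_e(xy)=\tau$. We have $C\subseteq H_\ell\cup\{u,v\}$, because $C$ is a clique of $G_e$ containing $u$ and $v$. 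Finally $\tau=s_e(C)\ge b_e(uv)=b$.
\item $K$ is a maximal clique of $H^\tau_\ell$, with the empty-set convention when $C=\{u,v\}$. Every vertex of $K$ is a common neighbour of $u$ and $v$ in $\Gint{\tau}{e}$, so $K\subseteq V(H^\tau_\ell)$, and $K$ is a clique there. If some vertex $z$ of $H^\tau_\ell$ were adjacent to all of $K$, then $K\cup\{z\}\cup\{u,v\}$ would be a clique of $\Gint{\tau}{e}$ strictly containing $C$, contradicting maximality. When $C=\{u,v\}$ this says $H^\tau_\ell$ has no vertex, which matches the convention in Algorithm~\ref{alg:edge}.
\end{itemize}
So the candidate $C$ arises for $\ell$ and $\tau$. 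It passes the test, since $s_e(C)=\tau$ and $uv$ is canonical, and it is output.

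\emph{No duplicates.} Suppose $(C,[\tau,e])$ is output from link $\ell'=(b',e',u',v')$, threshold $\tau'$ and clique $K'$. The interval output is $[\tau',e']$, so $e'=e$ and $\tau'=\tau$. The test forces $u'v'$ to be the canonical pair of $C$, so $u'v'=uv$. In a simple stream, $uv$ has at most one link ending at $e$, so $\ell'=\ell$. Then $K'=C\setminus\{u,v\}$ is determined. Within one pair $(\ell,\tau)$ each maximal clique of $H^\tau_\ell$ is enumerated once, and the thresholds in $\Theta_\ell$ are distinct values. So the triple $(\ell,\tau,K)$ is unique and $(C,[\tau,e])$ is output exactly once.

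The only delicate step is the membership $\tau\in\Theta_\ell$ together with the degenerate case $K=\emptyset$; both follow from the definitions as checked above.
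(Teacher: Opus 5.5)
Your proof is correct and follows essentially the same route as the paper's. Soundness comes from Lemma~\ref{lem:edge-cand} together with the test $s_e(C)=\tau$. Completeness comes from showing that the canonical link, the threshold $s_e(C)$ and $K=C\setminus\{u,v\}$ are examined together. Uniqueness holds because the tests pin down $\ell$, $\tau$ and $K$. You also make explicit two details the paper leaves implicit: that $C=\{u,v\}$ forces $H^\tau_\ell$ to be empty, and that simplicity gives a unique link of $uv$ ending at $e$.
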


\begin{proof}
An output $(C,[\tau,e])$ satisfies $s_e(C)=\tau$, so it is a maximal clique by Lemma~\ref{lem:edge-cand}. Conversely, let $(C,[\tau,e])$ be a maximal clique, with canonical pair $uv$ and canonical link $\ell=(b,e,u,v)$. Observe that $C\setminus\{u,v\}\subseteq H_\ell$. By Lemma~\ref{lem:charact}, $\tau=s_e(C)$ is the start of a link between two vertices of $C$, and $\tau\ge b_e(uv)=b$, so $\tau\in\Theta_\ell$. The set $K=C\setminus\{u,v\}$ is a clique of $H^\tau_\ell$, and it is maximal in this graph because $C$ is a maximal clique of $\Gint{\tau}{e}$. Thus $C$ is examined as a candidate for $\ell$ and $\tau$, it passes both tests, and $(C,[\tau,e])$ is output. It is output only for its canonical link, for $\tau=s_e(C)$ and for $K=C\setminus\{u,v\}$, and the maximal cliques of $H^\tau_\ell$ are found without duplication, so it is output exactly once.
\end{proof}

\begin{theorem}\label{thm:degree-count}
A link stream with $m$ links in which every instantaneous graph has maximum degree at most $\Delta$ has $O(m\Delta^23^{\Delta/3})$ maximal cliques. Conversely, for all integers $p\ge 1$ and $R\ge 1$, there is a link stream with $\Delta=3p+1$ and $m=M_p+R$ links, where $M_p=(9p^2+3p)/2$, which has $(R+2)3^p$ maximal cliques. When $R\ge M_p$, this is $\Omega(m3^{\Delta/3})$.
\end{theorem}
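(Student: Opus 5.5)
The upper bound follows from counting the candidates of Algorithm~\ref{alg:edge}. By Theorem~\ref{thm:edge-correct}, every maximal clique is output exactly once, as a candidate for some link $\ell=(b,e,u,v)$, some $\tau\in\Theta_\ell$ and some maximal clique $K$ of $H^\tau_\ell$. So it suffices to bound the number of triples $(\ell,\tau,K)$. Fix $\ell$.

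\begin{itemize}
\item The set $H_\ell$ consists of common neighbours of $u$ and $v$ in $G_e$. Since $v\in N_{G_e}(u)$ and $v\notin H_\ell$, we get $|H_\ell|\le\Delta-1$.
\item Hence $H_\ell\cup\{u,v\}$ has at most $\Delta+1$ vertices. It spans at most $\binom{\Delta+1}{2}$ edges of $G_e$, so $|\Theta_\ell|=O(\Delta^2)$.
\item For each $\tau$, the graph $H^\tau_\ell$ is an induced subgraph of $\Gint{\tau}{e}$ on a subset of $H_\ell$. By Moon--Moser it has at most $3^{(\Delta-1)/3}$ maximal cliques, and exactly one candidate (the empty set) if it has no vertex.
\end{itemize}

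Summing over the $m$ links gives $O(m\Delta^2 3^{\Delta/3})$ candidates, and hence at most that many maximal cliques.

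For the lower bound I would use the identity $M_p=(9p^2+3p)/2=\binom{3p+1}{2}$. This suggests that the $M_p$ base links are one link for each pair of a vertex set $X\cup\{w\}$ with $|X|=3p$. Every instantaneous graph then has at most $3p+1$ vertices, so its degree is automatically at most $3p+1=\Delta$. The $M_p$ pairs split into three groups:
\begin{itemize}
\item the $9p(p-1)/2$ edges of the Moon--Moser graph on $X$, with $p$ parts of size $3$;
\item the $3p$ pairs inside the parts;
\item the $3p$ pairs $wx$ with $x\in X$.
\end{itemize}
The Moon--Moser edges get long links covering a whole window $[0,L]$, so the $3^p$ transversals $S$ (one vertex from each part) are cliques throughout. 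Intra-part pairs are never simultaneously present with all their partners, so they only serve to shape intervals. The $w$-pairs and the $R$ extra links must create $R+2$ pairwise distinct time intervals $I_0,\dots,I_{R+1}$, each of the form $[s_e(C),f_e(C)]$ for $C=S$ or $C=S\cup\{w\}$. The extra links are further disjoint links on some fixed pairs, for example additional links of the pairs $wx$, or of one pair reused at disjoint times. The aim is that each extra link adds, for every transversal $S$, exactly one new maximal clique. That gives $3^p$ cliques from $(S,[0,L])$, $3^p$ from one base interval involving $w$, and $R\cdot3^p$ from the extra links, for a total of $(R+2)3^p$.

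The check of each claimed clique would go through Lemma~\ref{lem:charact}: it must be maximal in the relevant $\Gint{\tau}{e}$, start exactly at its threshold, and contain an ending pair. When $R\ge M_p$ we have $m\le 2R$, so $(R+2)3^p\ge R\cdot 3^{(\Delta-1)/3}=\Omega(m3^{\Delta/3})$.

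The main obstacle is the precise design of the $R$ extra links. A single link has to raise the count by a full $3^p$ while keeping the stream simple and the degree at most $3p+1$, and without destroying the maximality of the other cliques. My first attempt would be to give $w$ its $3p$ base links over the whole window and to use the extra links as repeated short links of one fixed pair $wx_0$ at disjoint times. I would then check whether this splits the intervals as needed. If transversals avoiding $x_0$ fail to multiply, I would reroute the extra links through the intra-part pairs instead.
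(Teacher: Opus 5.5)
Your upper bound is the paper's argument and is correct. The lower bound, however, is not proved. You leave the design of the $R$ extra links open, and the designs you suggest cannot work.

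First, if all links live on the $3p+1$ vertices of $X\cup\{w\}$, every instantaneous graph has maximum degree at most $3p$, so $\Delta=3p+1$ cannot hold. The identity $M_p=\binom{3p+1}{2}$ is a numerical coincidence, not the structure of the construction.

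Second, and more seriously, an isolated extra link cannot add $3^p$ cliques. Let $\ell=(b,e,y,z)$ be a short extra link that is the only link ending at $e$. By Lemma~\ref{lem:charact}(c), $\ell$ is the canonical link of every maximal clique ending at $e$. By the proof of Theorem~\ref{thm:edge-correct}, these cliques are $\{y,z\}\cup K$ with $K$ a maximal clique of $H^\tau_\ell$ and $\tau\in\Theta_\ell$. If no link starts in $(b,e]$, then $\Theta_\ell=\{b\}$. On your vertex set, $H^b_\ell$ has at most $3p-1$ vertices, hence at most $3^{(3p-1)/3}<3^p$ maximal cliques.

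This is exactly the failure you anticipated for $wx_0$: only the $3^{p-1}$ transversals through $x_0$ are multiplied. Rerouting through an intra-part pair $x_1x_2$ fails for the same reason, since the new cliques must contain both $x_1$ and $x_2$. Your first attempt is also not simple as stated: a base link of $wx_0$ over the whole window intersects the extra links of $wx_0$.

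The missing idea is to put the extra links on a pair whose common neighbourhood is the whole Moon--Moser graph $Q$. This needs two apex vertices instead of one. The paper takes $Q$ together with vertices $a,b$ and the following links:
\begin{itemize}
\item $(0,2R+2,x,y)$ for every edge $xy$ of $Q$;
\item $(0,2R+2,a,v)$ and $(0,2R+2,b,v)$ for every $v\in Q$;
\item $(2j,2j+1,a,b)$ for $j\in[R]$.
\end{itemize}
The first two groups give $\frac{9p(p-1)}{2}+6p=M_p$ base links, with no intra-part links and no base link of $ab$. Then $\Delta=3p+1$, attained by $a$ and $b$ while $ab$ is present.

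The only ending pair at $2j+1$ is $ab$, whose link starts at $2j$ while every other link starts at $0$. So by Lemma~\ref{lem:charact} the maximal cliques ending at $2j+1$ are the $3^p$ cliques $(X\cup\{a,b\},[2j,2j+1])$, where $X$ ranges over the transversals of $Q$. The maximal cliques ending at $2R+2$ are the $2\cdot 3^p$ cliques $(X\cup\{a\},[0,2R+2])$ and $(X\cup\{b\},[0,2R+2])$. This gives $(R+2)3^p$ in total.
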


\begin{proof}
By Theorem~\ref{thm:edge-correct}, every maximal clique is output by Algorithm~\ref{alg:edge} for some link $\ell$, some $\tau\in\Theta_\ell$ and some maximal clique of $H^\tau_\ell$. Observe that $H_\ell\cup\{u,v\}$ has at most $\Delta+1$ vertices, so that $|\Theta_\ell|=O(\Delta^2)$, and that $H^\tau_\ell$ has at most $\Delta-1$ vertices, so that it has at most $3^{(\Delta-1)/3}$ maximal cliques. This gives the upper bound.

For the lower bound, let $Q$ be the complete $p$-partite graph whose parts have three vertices each. Its maximal cliques are its $3^p$ transversals, that is, the sets containing one vertex of every part. Take two more vertices $a$ and $b$. Add the link $(0,2R+2,x,y)$ for every edge $xy$ of $Q$, and the links $(0,2R+2,a,v)$ and $(0,2R+2,b,v)$ for every vertex $v$ of $Q$. Finally, add the links $(2j,2j+1,a,b)$ for $j\in[R]$. There are $3p(3p-3)/2+6p=M_p$ links of the first kinds and $R$ links of $ab$. In every instantaneous graph, $a$ and $b$ have degree at most $3p+1$ and the vertices of $Q$ have degree $3p-1$, so that $\Delta=3p+1$. For every transversal $X$ of $Q$ and every $j\in[R]$, the set $C=X\cup\{a,b\}$ satisfies the three conditions of Lemma~\ref{lem:charact} for $\tau=2j$ and $e=2j+1$. Indeed, $s_e(C)=2j$ and $f_e(C)=2j+1$ are given by the link of $ab$, and a vertex of $Q$ outside $C$ belongs to the same part as a vertex of $X$, and is thus not adjacent to it. This gives $R3^p$ maximal cliques. There are no other maximal cliques ending at the times $2j+1$, since by Lemma~\ref{lem:charact} such a clique contains the ending pair $ab$ and starts at $2j$. The maximal cliques ending at $2R+2$ do not contain both $a$ and $b$, and they are the cliques $(X\cup\{a\},[0,2R+2])$ and $(X\cup\{b\},[0,2R+2])$. When $R\ge M_p$, we have $m=O(R)$, and thus $\alpha\ge R3^p=\Omega(m3^{\Delta/3})$.
\end{proof}

The proof gives the bound $\alpha\le m\binom{\Delta+1}{2}3^{(\Delta-1)/3}$. When $R\ge M_p$, the construction satisfies $m\le 2R$, and thus $\alpha\ge\frac m2 3^{(\Delta-1)/3}$. The gap between the two bounds is therefore a factor $O(\Delta^2)$. We now analyse the time complexity. In the rest of the paper, $\lambda(h)$ denotes the time delay, including the time to detect the end of the enumeration, of an algorithm listing the maximal cliques of an $h$-order graph given by its adjacency matrix. We have $\lambda(h)=O(h^3)$ by \cite{TIAS77}, and $\lambda(h)=O(h^{\omega})$ after a preprocessing in time $O(h^2)$ by \cite{MU04}, where $\omega$ is the exponent of matrix multiplication. We assume that $\lambda$ is non-decreasing and that $\lambda(O(h))=O(\lambda(h))$, which holds for these polynomial bounds.

\begin{lemma}\label{lem:edge-charging}
Send every candidate $C$ of Algorithm~\ref{alg:edge}, examined for a link $\ell=(b,e,u,v)$ and a threshold $\tau$, to the maximal clique $(C,[s_e(C),e])$. Then every candidate is sent to a maximal clique which is output when the candidate is examined or before, and at most $O(q^2\Delta^2)$ candidates are sent to each maximal clique.
\end{lemma}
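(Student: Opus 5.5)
By Lemma~\ref{lem:edge-cand}, every candidate $C$ examined for a link $\ell=(b,e,u,v)$ and a threshold $\tau$ gives a maximal clique $(C,[s_e(C),e])$. So the charging map is well defined, and two things remain: the timing claim and the counting claim.

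For the timing claim, the plan is to show that $(C,[s_e(C),e])$ is output by Algorithm~\ref{alg:edge} when $C$ is examined or before. Let $u'v'$ be the canonical pair of this clique and $\ell'$ its canonical link, which ends at $e$. By Theorem~\ref{thm:edge-correct}, the clique is output exactly when it is examined for $\ell'$, the threshold $\tau'=s_e(C)$ and $K'=C\setminus\{u',v'\}$. The pair $uv$ is a pair of $C$ ending at $e$, so $u'v'\le uv$ in the fixed order on pairs. If $u'v'<uv$, then $\ell'$ precedes $\ell$, since links are sorted by end time and then by pair, and the output happened earlier. If $u'v'=uv$, then $\ell'=\ell$ because the stream is simple. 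By Lemma~\ref{lem:shadow}, $\tau'=s_e(C)\le\tau$, and the thresholds are scanned in increasing order, so $\tau'$ is handled no later than $\tau$. If $\tau'<\tau$, the output came earlier. If $\tau'=\tau$, then $C$ itself passes both tests and is output at this very step.

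For the counting claim, fix a maximal clique $(C,[\sigma,e])$. Every candidate sent to it has vertex set exactly $C$ and is examined for some link ending at $e$. That link's pair $uv$ lies in $C$ and ends at $e$, which leaves at most $\binom{q}{2}=O(q^2)$ choices of the link, because $|C|\le q$ ($C$ is a clique of $G_e$). For a fixed link $\ell$, the threshold lies in $\Theta_\ell$, and $|\Theta_\ell|=O(\Delta^2)$ as in the proof of Theorem~\ref{thm:degree-count}. For fixed $\ell$ and $\tau$, the set $K=C\setminus\{u,v\}$ is determined. The maximal cliques of $H^\tau_\ell$ are listed without duplication, so each pair $(\ell,\tau)$ yields at most one candidate equal to $C$. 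Multiplying the three factors gives $O(q^2\Delta^2)$ candidates per maximal clique.

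The main obstacle is the timing argument when the candidate is examined for the canonical link itself. That case needs both the monotonicity $s_e(C)\le\tau$ and the increasing order of thresholds. It also needs to observe that $\tau'=s_e(C)$ belongs to $\Theta_\ell$, which follows from the proof of Theorem~\ref{thm:edge-correct}, so that this threshold is actually visited.
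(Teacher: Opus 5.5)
Your proof is correct and follows essentially the same route as the paper. You compare the candidate's pair with the canonical pair, use $s_e(C)\le\tau$ and the increasing order of thresholds when the link is the canonical one, and count candidates by a pair of $C$ together with a threshold in $\Theta_\ell$; your extra remarks (simplicity forces $\ell'=\ell$ when the pairs coincide, and $s_e(C)\in\Theta_\ell$ so that threshold is actually visited) spell out details the paper leaves implicit.
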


\begin{proof}
By Lemma~\ref{lem:edge-cand}, $(C,[s_e(C),e])$ is a maximal clique. Let $\ell^*$ be its canonical link and $x^*y^*$ its canonical pair. The pair $uv$ is an ending pair of $C$ at $e$, so $x^*y^*$ is not larger than $uv$, and $\ell^*$ is either equal to $\ell$ or considered before $\ell$. If $\ell^*=\ell$, then the maximal clique is output for $\ell$ and for the threshold $s_e(C)\le\tau$. This threshold is considered before $\tau$, or is equal to $\tau$, in which case the candidate is the output itself. For the second claim, observe that a candidate sent to $(C,[s,e])$ is determined by the pair $uv$, which is one of the at most $\binom q2$ pairs of $C$, and by $\tau\in\Theta_\ell$. Thus there are $O(q^2\Delta^2)$ of them.
\end{proof}

\begin{theorem}\label{thm:degree-enum}
Given a simple link stream, Algorithm~\ref{alg:edge} can be implemented with setup time $O(m\log m)$ and amortized time delay $O(q^2\Delta^2(\Delta\log m+\Delta^2+\lambda(\Delta)))$. It can also be implemented with the same setup time and polynomial time delay $O(\Delta^4(\Delta\log m+\Delta^2+\lambda(\Delta)))$.
\end{theorem}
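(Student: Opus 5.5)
The plan is to combine Lemma~\ref{lem:accounting} with the charging map of Lemma~\ref{lem:edge-charging}, which already gives $\mu=O(q^2\Delta^2)$. It then suffices to show that, after a setup of $O(m\log m)$, the work per candidate is $\kappa=O(\Delta\log m+\Delta^2+\lambda(\Delta))$. This bound must also cover the steps spent before the first candidate, between two consecutive candidates, and after the last one. The delay version then follows from Lemma~\ref{lem:buffer}. Solutions have size $O(q)\le O(\Delta)$, and $p=(\mu+1)\kappa$ is computable at the end of setup once $q,\Delta$ are known. Replacing $q\le\Delta+1$ gives the stated $O(\Delta^4(\cdots))$.

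\textbf{Setup.} Sort all links by end time and then by pair, in $O(m\log m)$. Also produce the sequence of events (starts and ends of links), sorted by time. We sweep this sequence so that, when link $\ell$ is processed, we hold $G_e$ in a dictionary: a balanced search tree keyed by pairs, storing $b_e(xy)$ for every present edge, with adjacency lists of size $\le\Delta$. Each event costs $O(\log m)$, so the sweep totals $O(m\log m)$. The sweep must be interleaved with enumeration. Between two consecutive end times we may have to process many start events, which could break the delay bound. To charge them, every link starts once and ends once, so I would perform these updates during setup: precompute, for every link $\ell$ in the sorted order, nothing beyond its position. Alternatively, and more simply, store during setup a persistent/offline snapshot. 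Concretely, store for each pair $uv$ its links in a sorted array, so that $b_e(xy)$ for any pair $xy$ and time $e$ is found by binary search in $O(\log m)$. Store also, for each vertex, its links sorted by start time, so that $N_{G_e}(u)$ can be listed. This last step is the delicate part: listing $N_{G_e}(u)$ in $O(\Delta\log m)$ time needs an interval-stabbing structure per vertex. I would use an interval tree per vertex built in $O(m\log m)$ total, giving stabbing queries in $O(\log m+\Delta)$.

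\textbf{Per link $\ell=(b,e,u,v)$.} Compute $N_{G_e}(u)$ and $N_{G_e}(v)$ and intersect them to get $H_\ell$, in $O(\Delta\log m)$. Build the $(\Delta+1)$-vertex matrix of values $b_e(xy)$ on $H_\ell\cup\{u,v\}$. Here I expect the main obstacle, since $O(\Delta^2)$ binary searches cost $O(\Delta^2\log m)$, more than the budget. To avoid this, for each $x\in H_\ell\cup\{u,v\}$ compute the stabbing list $N_{G_e}(x)$ with start times in $O(\Delta+\log m)$, and mark the members of $H_\ell$ with a timestamped array. This fills the matrix in $O(\Delta\log m+\Delta^2)$ overall. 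Then sort the $O(\Delta^2)$ values $\ge b$ to get $\Theta_\ell$, and sort the pairs by $b_e$, in $O(\Delta^2\log\Delta)$. Since $\Delta^2\log\Delta$ is absorbed by $\Delta^2\log m$ charged over the $\ge1$ candidates of $\ell$ (the first threshold $b$ always yields at least one candidate, possibly $K=\emptyset$), this cost is fine. Amortising it onto that first candidate adds only a constant factor to $\kappa$ at $\mu$ scale.

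\textbf{Per threshold $\tau$, in increasing order.} Maintain the adjacency matrix of $\Gint{\tau}{e}$ on $H_\ell\cup\{u,v\}$ by inserting the edges whose start equals $\tau$. Recompute the vertex set of $H^\tau_\ell$ (common neighbours of $u,v$) in $O(\Delta)$, and form its adjacency matrix in $O(\Delta^2)$. Then run the $\lambda$-delay enumerator, which always produces at least one candidate. For each candidate, $s_e(C)$ is a minimum over $O(q^2)$ matrix entries, and canonicity is checked by testing which pairs of $C$ end at $e$. To know whether $e_e(xy)=e$, I would store end times in the matrix too. Each check is $O(q^2)\le O(\Delta^2)$, so every gap between candidates is $O(\Delta\log m+\Delta^2+\lambda(\Delta))$, as required.
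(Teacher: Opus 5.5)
Your proposal follows the paper's proof essentially step for step: interval trees on the links of every vertex, stabbing queries at $e$ plus a marking array to obtain $H_\ell$ and the values $b_e$ inside $H_\ell\cup\{u,v\}$ in $O(\Delta(\log m+\Delta))$, an adjacency matrix of $H^\tau_\ell$ built in $O(\Delta^2)$ per threshold and fed to a $\lambda(\Delta)$-delay enumerator, then Lemmas~\ref{lem:accounting} and~\ref{lem:edge-charging} with $\mu=O(q^2\Delta^2)$, and Lemma~\ref{lem:buffer} after replacing $q$ by $\Delta+1$. The slips are minor and easily fixed:
\begin{itemize}
\item Sorting $\Theta_\ell$ in $O(\Delta^2\log\Delta)$ is not covered by $\kappa$ as you claim, since the budget contains $\Delta\log m$ and not $\Delta^2\log m$. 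You can build a binary heap in $O(\Delta^2)$ and extract each threshold in $O(\log\Delta)$, as the paper does. Alternatively, note that the first candidate of a link is sent to a given maximal clique at most $\binom{q}{2}$ times, so the sorts add only $O(q^2\Delta^2\log\Delta)$ per solution.
\item $s_e(C)$ is a maximum, not a minimum.
\item You should state that $\Delta$ is computed by a sweep during the setup, since $p$ must be known when the setup ends.
\end{itemize}
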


\begin{proof}
In the setup phase, we sort the links by end time and then by pair, and for every vertex $x$ we build an interval tree on the links incident to $x$. Such a tree returns the links of $x$ containing a time $t$ in time $O(\log m+h)$, where $h$ is the number of these links \cite{BKOS08}. We also compute $\Delta$ by sweeping the time and maintaining the degrees of the current instantaneous graph. This takes $O(m\log m)$ time in total.

When a link $\ell=(b,e,u,v)$ is considered, we query the trees of $u$ and $v$ to obtain $N_{G_e}(u)$ and $N_{G_e}(v)$, together with the start times of the corresponding links, and we compute $H_\ell$ in time $O(\log m+\Delta)$. Then we query the tree of every vertex of $H_\ell$ to obtain the edges of $G_e$ between vertices of $H_\ell\cup\{u,v\}$ together with their values $b_e$, in time $O(\Delta(\log m+\Delta))$. We put the values of $\Theta_\ell$ in a binary heap in time $O(\Delta^2)$, so that every threshold is extracted in increasing order in time $O(\log\Delta)$. For every $\tau\in\Theta_\ell$, we build the adjacency matrix of $H^\tau_\ell$ in time $O(\Delta^2)$ and list its maximal cliques with time delay $\lambda(\Delta)$. The two tests of the algorithm take time $O(q^2)$ per candidate.

Observe that every link and every threshold yield at least one candidate, since $b\in\Theta_\ell$ and since every graph, including the graph without vertex, has a maximal clique. Thus condition (i) of Lemma~\ref{lem:accounting} holds with $\kappa=O(\Delta\log m+\Delta^2+\lambda(\Delta))$, and conditions (ii) and (iii) hold with $\mu=O(q^2\Delta^2)$ by Lemma~\ref{lem:edge-charging}. Therefore, the algorithm has amortized time delay $O(q^2\Delta^2\kappa)$. The value of $q$ is not known after the setup phase, but $q\le\Delta+1$, so the amortized time delay is also $p=O(\Delta^4\kappa)$, which can be computed from $\Delta$ and $m$. Since a solution has at most $q\le\Delta+1$ vertices, Lemma~\ref{lem:buffer} gives the time delay $O(p)$.
\end{proof}

Observe that, when Algorithm~\ref{alg:edge} considers a link ending at $e$, it only uses the links that start at $e$ or before. It can therefore be run online, sweeping the time with the three phases described in Section~\ref{sec:dynamic}: at a time $t$, the links starting at $t$ are inserted, then the links ending at $t$ are considered, and finally they are deleted. We store the current neighbors of every vertex, with the start times of the corresponding links, in a balanced search tree, and we compute the sets $H_\ell$ and the edges between their vertices with an array that marks the vertices of $H_\ell\cup\{u,v\}$. In that case, every maximal clique is output at its end time, and the total time is $O(m\log m)+\alpha O(q^2\Delta^2(\Delta^2+\lambda(\Delta)))$. Note also that the algorithm of Baudin et al.\ \cite{BMT24} anchors every clique at a link starting at its start time and lists all the cliques of the instantaneous graph containing this link, which may cost a factor $2^q$ per maximal clique. Algorithm~\ref{alg:edge} only lists maximal cliques of the graphs $H^\tau_\ell$, one threshold at a time, which replaces this factor by $O(q^2\Delta^2)$.

\section{Algorithms parameterized by the degeneracy}\label{sec:degeneracy}

The bounds of Section~\ref{sec:degree} depend on the maximum degree, which can be much larger than the degeneracy. We first observe that the number of maximal cliques cannot be bounded by the number of links times a function of the degeneracy.

\begin{example}\label{ex:quadratic}
Let $N\ge 2$. Take the vertices $a$, $b$ and $c_1,\dots,c_N$, the links $(0,2N,a,c_i)$ and $(0,2N,b,c_i)$ for $i\in[N]$, and the links $(2j-1,2j-1,a,b)$ for $j\in[N]$. Every instantaneous graph is $K_{2,N}$, possibly with the edge $ab$, so $k=2$, and $m=3N$. By Lemma~\ref{lem:charact}, $(\{a,b,c_i\},[2j-1,2j-1])$ is a maximal clique for all $i,j\in[N]$, so that $\alpha\ge N^2=m^2/9$.
\end{example}

The relevant quantity is therefore not $m$ but the product of the number $n$ of vertices and of the number $|\cT|$ of end times, as in the bounds of \cite{HMNS17,BHM+19}.

\subsection{Orders, anchors and suffixes}

\begin{definition}\label{def:anchor}
Assume that for every end time $e\in\cT$ we are given a total order $\sigma_e$ on $V$. For a vertex $w$, $\Np_e(w)$ is the set of neighbors of $w$ in $G_e$ that have higher rank than $w$ in $\sigma_e$, and $\Np_e[w]=\Np_e(w)\cup\{w\}$. For a set $C$ of vertices, $\first_e(C)$ is the vertex of $C$ with lowest rank in $\sigma_e$, and for $w\in C$, $C_{\ge w}$ is the set of vertices of $C$ with rank at least $\sigma_e(w)$. For a pair $xy\in P_e$, the anchors of $xy$ at $e$ are the vertices $w$ such that $x,y\in\Np_e[w]$.
\end{definition}

We denote by $c$ the largest size of a set $\Np_e(w)$. If $\sigma_e$ is a degeneracy ordering of $G_e$ for every $e\in\cT$, then $c\le k$. Observe that every ending pair $xy\in P_e$ has at least one anchor, namely the vertex of lowest rank among $x$ and $y$, since $xy$ is an edge of $G_e$. Ordering the vertices of a clique by increasing rank in $\sigma_e$, we will see cliques as strings; the set $C_{\ge w}$ then corresponds to a suffix of the string of $C$, which is proper if $w\ne\first_e(C)$. For an anchor $w$ of $xy$, we call $W=\Np_e[w]$ its window, we let $Z=\{w,x,y\}$, which is a clique of $G_e$, we let $\Theta_w$ be the set of values $b_e(x'y')$ over the edges $x'y'$ of $G_e$ with $x',y'\in W$, and we let $F^\tau_w=\Gint{\tau}{e}[W]$.

\begin{lemma}\label{lem:outwindow}
Let $C$ be a clique of $G_e$ and $w\in C$. Then $C_{\ge w}\subseteq\Np_e[w]$. In particular, $C\subseteq\Np_e[\first_e(C)]$.
\end{lemma}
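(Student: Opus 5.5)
The statement follows directly from Definition~\ref{def:anchor}; the plan is to unfold the definitions of $C_{\ge w}$ and $\Np_e[w]$.

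Let $z\in C_{\ge w}$. By definition $z\in C$ and $\sigma_e(z)\ge\sigma_e(w)$. Two cases arise:
\begin{itemize}
\item If $z=w$, then $z\in\Np_e[w]$ by the definition $\Np_e[w]=\Np_e(w)\cup\{w\}$.
\item Otherwise $\sigma_e(z)>\sigma_e(w)$, because $\sigma_e$ is a total order and distinct vertices have distinct ranks. Since $C$ is a clique of $G_e$ containing both $z$ and $w$, the vertex $z$ is a neighbor of $w$ in $G_e$ of higher rank, so $z\in\Np_e(w)\subseteq\Np_e[w]$.
\end{itemize}
This proves $C_{\ge w}\subseteq\Np_e[w]$.

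For the second claim, take $w=\first_e(C)$, which belongs to $C$ because $C$ is nonempty. Every vertex of $C$ has rank at least $\sigma_e(w)$, so $C_{\ge w}=C$, and the first part gives $C\subseteq\Np_e[\first_e(C)]$.

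There is no real obstacle here. The only point to keep straight is that the neighborhood is taken in $G_e$, which is where $C$ is assumed to be a clique, and that the strictness of the rank comparison is used to exclude $w$ itself from the neighbor case.
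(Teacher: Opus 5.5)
Your proof is correct and follows the same approach as the paper's: every vertex of $C_{\ge w}$ other than $w$ is adjacent to $w$ in $G_e$ and has higher rank, hence lies in $\Np_e(w)$. Your explicit treatment of the second claim via $C_{\ge \first_e(C)}=C$ is exactly what the paper leaves implicit.
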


\begin{proof}
Every vertex of $C_{\ge w}$ distinct from $w$ is adjacent to $w$ in $G_e$ and has higher rank than $w$ in $\sigma_e$.
\end{proof}

The next lemma is the counterpart, for link streams, of Corollary~2 of \cite{Man19}. It is illustrated in Figure~\ref{fig:suffix}.

\begin{lemma}\label{lem:suffix}
Let $e\in\cT$, $w\in V$ and $\tau\le e$, and let $C$ be a maximal clique of $F^\tau_w$ such that $w\in C$, $s_e(C)=\tau$ and $f_e(C)=e$. If $C$ is not a maximal clique of $\Gint{\tau}{e}$, then there is a maximal clique $(C^*,[\tau,e])$ of $\cL$ such that $\first_e(C^*)$ has lower rank than $w$ in $\sigma_e$ and $C=C^*_{\ge w}$. In particular, the string of $C$ is a proper suffix of the string of $C^*$.
\end{lemma}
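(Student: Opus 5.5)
Since $C$ is a clique of $F^\tau_w=\Gint{\tau}{e}[W]$, it is a clique of $\Gint{\tau}{e}$. It is not maximal there, so we can extend it to a maximal clique $C^*$ of $\Gint{\tau}{e}$ with $C\subsetneq C^*$. We will check three things: that $(C^*,[\tau,e])$ is a maximal clique of $\cL$, that $C^*_{\ge w}=C$, and that $\first_e(C^*)$ has lower rank than $w$.

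For the first claim we apply Lemma~\ref{lem:charact} to $C^*$. Condition (a) holds by the choice of $C^*$. For (b), every pair of $C^*$ is an edge of $\Gint{\tau}{e}$, so $b_e(xy)\le\tau$ for all pairs and hence $s_e(C^*)\le\tau$. The pair of $C$ attaining $s_e(C)=\tau$ lies in $C^*$, which gives $s_e(C^*)\ge\tau$, so $s_e(C^*)=\tau$. For (c), $C^*$ is a clique of $\Gint{\tau}{e}$, hence of $G_e$, so $f_e(C^*)\ge e$. The ending pair of $C$ (from $f_e(C)=e$) lies in $C^*$, which gives $f_e(C^*)\le e$, so $f_e(C^*)=e$.

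The main step is the suffix identity $C^*_{\ge w}=C$. Since $w\in C\subseteq C^*$ and $C^*$ is a clique of $G_e$, Lemma~\ref{lem:outwindow} gives $C^*_{\ge w}\subseteq\Np_e[w]=W$. We now show $C\subseteq C^*_{\ge w}$. Every vertex of $C$ lies in $W$, and is either $w$ or a higher-rank neighbor of $w$, so it has rank at least $\sigma_e(w)$; as $C\subseteq C^*$, it lies in $C^*_{\ge w}$. Conversely, $C^*_{\ge w}$ is a clique of $\Gint{\tau}{e}$ contained in $W$, so it is a clique of $F^\tau_w$ containing $C$. Maximality of $C$ in $F^\tau_w$ then forces $C^*_{\ge w}=C$.

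Finally, $C^*\ne C=C^*_{\ge w}$, so $C^*$ has a vertex of rank lower than $\sigma_e(w)$. Hence $\first_e(C^*)$ has lower rank than $w$. Reading both sets by increasing rank, the string of $C$ is the suffix of the string of $C^*$ starting at $w$, and this suffix is proper because the preceding part of the string is nonempty. No step looks hard. The only care needed is in (b) and (c), where we must transfer the pairs attaining $s_e(C)$ and $f_e(C)$ from $C$ into $C^*$.
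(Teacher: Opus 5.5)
Your proof is correct and follows the paper's argument: you extend $C$ to a maximal clique $C^*$ of $\Gint{\tau}{e}$, verify the conditions of Lemma~\ref{lem:charact} by transferring the pairs attaining $s_e(C)$ and $f_e(C)$ into $C^*$, and then use Lemma~\ref{lem:outwindow} together with the maximality of $C$ in $F^\tau_w$ to get $C^*_{\ge w}=C$. The only difference is in how you place $\first_e(C^*)$ below $w$: the paper first shows that any extending vertex $z$ has lower rank than $w$ and uses $\sigma_e(z)<\sigma_e(w)$, whereas you read it off directly from $C^*_{\ge w}=C\subsetneq C^*$, which is slightly more economical.
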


\begin{proof}
Let $z\notin C$ be adjacent in $\Gint{\tau}{e}$ to all the vertices of $C$. Assume by contradiction that $z$ has higher rank than $w$ in $\sigma_e$. Since $z$ is adjacent to $w$ in $G_e$, this implies that $z\in\Np_e(w)$, and thus that $C\cup\{z\}$ is a clique of $F^\tau_w$, which contradicts the maximality of $C$. Therefore $z$ has lower rank than $w$.

Let $C^*$ be a maximal clique of $\Gint{\tau}{e}$ containing $C\cup\{z\}$. Every pair of $C^*$ is an edge of $\Gint{\tau}{e}$, so $s_e(C^*)\le\tau$, and $s_e(C^*)\ge s_e(C)=\tau$ since $C\subseteq C^*$. Thus $s_e(C^*)=\tau$. Similarly, $f_e(C^*)\ge e$ since every pair of $C^*$ is an edge of $G_e$, and $f_e(C^*)\le f_e(C)=e$, so that $f_e(C^*)=e$. By Lemma~\ref{lem:charact}, $(C^*,[\tau,e])$ is a maximal clique of $\cL$. Observe that $\first_e(C^*)$ has rank at most $\sigma_e(z)<\sigma_e(w)$.

Finally, $C^*_{\ge w}\subseteq\Np_e[w]=W$ by Lemma~\ref{lem:outwindow}, and $C^*_{\ge w}$ is a clique of $\Gint{\tau}{e}$, hence of $F^\tau_w$. It contains $C$, since every vertex of $C$ belongs to $W$ and has thus rank at least $\sigma_e(w)$. By maximality of $C$ in $F^\tau_w$, this implies that $C^*_{\ge w}=C$.
\end{proof}

\begin{figure}[t]
\centering
\begin{tikzpicture}[x=1.15cm,every node/.style={font=\small}]
  \draw[->,gray] (-0.3,0) -- (9.6,0) node[right] {$\sigma_e$};
  \foreach \x/\l in {0.5/,1.5/z,2.5/,3.5/w,4.5/x,5.5/,6.5/y,7.5/,8.5/} {
    \fill (\x,0) circle (2pt);
    \node[below=3pt] at (\x,0) {$\l$};
  }
  \node[below=3pt] at (0.5,0) {$w^*$};
  \draw[blue,thick,rounded corners] (0.2,0.3) rectangle (8.8,1.05);
  \node[blue] at (4.5,0.68) {$C^*$: a maximal clique of $\Gint{\tau}{e}$, output at its first vertex $w^*$};
  \draw[decorate,decoration={brace,amplitude=5pt}] (3.3,1.25) -- (8.7,1.25);
  \node[above] at (6,1.42) {window $\Np_e[w]$};
  \draw[red,thick,rounded corners] (3.2,-0.85) rectangle (8.8,-0.45);
  \node[red,below] at (6,-0.85) {$C=C^*_{\ge w}$: a candidate at the anchor $w$, rejected since $(\tau,C)\in\cD$};
\end{tikzpicture}
\caption{Lemma~\ref{lem:suffix}. A vertex $z$ extending a candidate $C$ found in the window of $w$ has lower rank than $w$. The maximal clique $C^*$ containing $C\cup\{z\}$ is output at an earlier anchor, and $C$ is its suffix starting at $w$.}\label{fig:suffix}
\end{figure}

\subsection{The algorithm for given orders}

The outline of the second algorithm is the following. We consider the end times in increasing order. For every end time $e$, we consider the ending pairs in increasing order, the anchors of each pair by increasing rank in $\sigma_e$, and the thresholds in increasing order. For each of them, we find all the maximal cliques of $F^\tau_w$ which contain $Z$. Such a candidate is output if it passes three tests. The first two are those of Algorithm~\ref{alg:edge}. For the third one, we use a dictionary $\cD$ of pairs $(\tau,S)$, emptied at every end time, in which we insert the proper suffixes of the maximal cliques that are output, and a candidate is rejected if it is found in $\cD$. The procedure is described in Algorithm~\ref{alg:vertex}.

\begin{algorithm}[t]
\caption{Maximal cliques anchored at vertices}\label{alg:vertex}
\KwData{A simple link stream $\cL$ and an order $\sigma_e$ for every $e\in\cT$.}
\KwResult{All the maximal cliques of $\cL$.}
\For{every $e\in\cT$ in increasing order}{
  Initialize $\cD$ an empty dictionary\;
  \For{every pair $xy\in P_e$ in increasing order}{
    \For{every anchor $w$ of $xy$ at $e$, by increasing rank in $\sigma_e$}{
      $W\gets\Np_e[w]$, $Z\gets\{w,x,y\}$\;
      \For{every $\tau\in\Theta_w$ with $\tau\ge s_e(Z)$, in increasing order}{
        \For{every maximal clique $C$ of $F^\tau_w$ which contains $Z$}{
          \uIf{$s_e(C)<\tau$ or $xy$ is not the smallest pair of $C$ in $P_e$ or $(\tau,C)\in\cD$}{Reject it.}
          \Else{Insert $(\tau,C_{\ge v})$ in $\cD$ for every $v\in C\setminus\{w\}$\;
            Output $(C,[\tau,e])$.}
        }
      }
    }
  }
}
\end{algorithm}

The maximal cliques of $F^\tau_w$ which contain $Z$ are the sets $Z\cup K$ where $K$ is a maximal clique of the subgraph of $F^\tau_w$ induced by the common neighbors of the vertices of $Z$; this subgraph has at most $c$ vertices. Observe that $Z$ is a clique of $F^\tau_w$ when $\tau\ge s_e(Z)$. In the following, a candidate is a tuple $(e,xy,w,\tau,C)$ examined by the innermost loop.

\begin{lemma}\label{lem:dict}
Whenever a candidate is examined, for every element $(\tau,S)$ of $\cD$, $S$ is a proper subset of a clique $C$ of $\Gint{\tau}{e}$ such that $(C,[\tau,e])$ has been output, where $e$ is the current end time. In particular, $S$ is not a maximal clique of $\Gint{\tau}{e}$.
\end{lemma}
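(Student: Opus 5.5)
The claim is an invariant of Algorithm~\ref{alg:vertex}, and I would prove it by looking at how elements enter $\cD$, in the order the algorithm runs. The dictionary is emptied at the start of each end time. So every element present while a candidate is examined at end time $e$ was inserted during the same iteration of the outer loop, that is, at this same $e$. Elements are inserted only in the else branch, just before some $(C,[\tau,e])$ is output. At that moment the algorithm inserts $(\tau,C_{\ge v})$ for every $v\in C\setminus\{w\}$. It therefore suffices to check, for each such inserted element, that it has the required property. The property does not depend on anything that changes afterwards within the same end time, since it refers only to $\Gint{\tau}{e}$ and to the fact that the clique has been output.

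Fix such an insertion, with $C$ a candidate that passed all three tests for the anchor $w$ and the threshold $\tau$. The set $C$ is a clique of $F^\tau_w=\Gint{\tau}{e}[W]$, hence a clique of $\Gint{\tau}{e}$. Let $v\in C\setminus\{w\}$ and set $S=C_{\ge v}$. Then $S\subseteq C$. Recall that $C\subseteq W=\Np_e[w]$, so every vertex of $C$ has rank at least $\sigma_e(w)$. Since $w\in Z\subseteq C$ and $v\ne w$, the vertex $v$ has rank strictly larger than $w$. Thus $w\notin S$, and $S$ is a proper subset of $C$. The clique $(C,[\tau,e])$ is output immediately after the insertion and before any later candidate is examined. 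So at every later examination it has indeed been output, which gives the first claim. The output is a maximal clique of $\cL$ in the sense needed here only through $C$ being a clique of $\Gint{\tau}{e}$.

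For the consequence, note that $C$ is a clique of $\Gint{\tau}{e}$ strictly containing $S$, so $S$ is not a maximal clique of $\Gint{\tau}{e}$. No hypothesis beyond $C$ being a clique of $\Gint{\tau}{e}$ is needed here.

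The only delicate point is bookkeeping: the threshold stored with $S$ must be the same $\tau$ as the one for which $C$ is a clique. This holds because the pair inserted is literally $(\tau,C_{\ge v})$. The end time must also coincide, which follows from emptying $\cD$ at each $e$. A second small point is that the ordering of examination versus output fits the statement "has been output". Insertions precede the output only within a single step in which no other candidate is examined, so this causes no difficulty.
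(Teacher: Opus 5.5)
Your proof is correct and follows essentially the paper's argument. An element enters $\cD$ only as $(\tau,C_{\ge v})$ with $v\ne w=\first_e(C)$, at the moment $(C,[\tau,e])$ is output, and $C$ is a clique of $\Gint{\tau}{e}$ that contains $w\notin C_{\ge v}$. Your extra bookkeeping clarifies points the paper leaves implicit: that $\cD$ is reset at each end time, and that the argument does not use maximality of outputs, which would be circular since this lemma is used in proving Theorem~\ref{thm:vertex-correct}.
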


\begin{proof}
An element $(\tau,C_{\ge v})$ is inserted when $(C,[\tau,e])$ is output, and $C$ is then a clique of $\Gint{\tau}{e}$. Since $v\ne w=\first_e(C)$, the set $C_{\ge v}$ does not contain $w$ and is a proper subset of $C$. Since $C$ is a clique of $\Gint{\tau}{e}$, $C_{\ge v}$ is not a maximal clique of this graph.
\end{proof}

\begin{lemma}\label{lem:vertex-complete}
Let $(C,[\tau,e])$ be a maximal clique of $\cL$, let $xy$ be its canonical pair and let $w=\first_e(C)$. Then Algorithm~\ref{alg:vertex} outputs $(C,[\tau,e])$ when it considers the pair $xy$, the anchor $w$ and the threshold $\tau$, and does not output it at any other moment.
\end{lemma}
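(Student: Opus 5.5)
We split the statement into two parts: the clique $(C,[\tau,e])$ is examined and passes all three tests at the stated moment, and it cannot be output at any other moment.

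\emph{It is examined.} By Lemma~\ref{lem:charact}, $C$ is a maximal clique of $\Gint{\tau}{e}$, $s_e(C)=\tau$ and $f_e(C)=e$. The canonical pair $xy$ lies in $P_e$ and $x,y\in C$. By Lemma~\ref{lem:outwindow}, $C\subseteq\Np_e[w]$, so $x,y\in\Np_e[w]$ and $w$ is an anchor of $xy$. Then $W=\Np_e[w]$ and $Z=\{w,x,y\}\subseteq C$, hence $s_e(Z)\le s_e(C)=\tau$. Moreover $\tau=s_e(C)$ is the value $b_e$ of a pair of $C\subseteq W$, so $\tau\in\Theta_w$. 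Since $C$ is maximal in $\Gint{\tau}{e}$ and $C\subseteq W$, it is also a maximal clique of $F^\tau_w=\Gint{\tau}{e}[W]$, and it contains $Z$. Hence $(e,xy,w,\tau,C)$ is a candidate.

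\emph{It passes the tests.} The first two tests hold because $s_e(C)=\tau$ and $xy$ is canonical. The remaining point is to show $(\tau,C)\notin\cD$ when the candidate is examined. By Lemma~\ref{lem:dict}, every $S$ with $(\tau,S)\in\cD$ is not a maximal clique of $\Gint{\tau}{e}$, while $C$ is one, so $(\tau,C)\notin\cD$. Since $\cD$ is emptied at every end time, elements inserted at earlier end times play no role. Lemma~\ref{lem:dict} refers to the same $\tau$ and the current end time $e$, so this settles the third test, and $(C,[\tau,e])$ is output at this moment.

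\emph{It is output only once.} Suppose some output $(C',[\tau',e'])$ equals $(C,[\tau,e])$ and arises from the tuple $(e',x'y',w',\tau',C')$. Then $C'=C$, $\tau'=\tau$ and $e'=e$. The second test forces $x'y'$ to be the smallest pair of $C$ in $P_e$, that is, $x'y'=xy$. Since $w'\in Z'\subseteq C$ and $C\subseteq W'=\Np_e[w']$, every vertex of $C$ other than $w'$ has rank higher than $w'$, so $w'=\first_e(C)=w$. The threshold is $\tau$, and for this choice of $(xy,w,\tau)$ the maximal cliques of $F^\tau_w$ containing $Z$ are listed without duplication, since they correspond bijectively to the maximal cliques $K$ of the induced subgraph on common neighbors. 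So the output happens only at the stated moment.

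The main obstacle is the third test: one must check that the dictionary cannot contain $C$ itself. This is exactly what Lemma~\ref{lem:dict} provides, once one notes that $\cD$ is reset at every end time, so that its elements concern the current $e$.
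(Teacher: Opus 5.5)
Your proof is correct and follows the paper's argument. You use Lemma~\ref{lem:outwindow} and Lemma~\ref{lem:charact} to show that $(e,xy,w,\tau,C)$ is examined, and Lemma~\ref{lem:dict} to show that it passes the third test. For uniqueness, you use $w'\in C\subseteq\Np_e[w']$ to force $w'=\first_e(C)$, together with the second test and the duplicate-free listing of the maximal cliques of $F^\tau_w$.
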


\begin{proof}
By Lemma~\ref{lem:outwindow}, $C\subseteq\Np_e[w]$, so $w$ is an anchor of $xy$ and $Z\subseteq C$. By Lemma~\ref{lem:charact}, $\tau=s_e(C)$. This implies that $\tau\in\Theta_w$ and $\tau\ge s_e(Z)$. Lemma~\ref{lem:charact} also implies that $C$ is a maximal clique of $\Gint{\tau}{e}$, hence of $F^\tau_w$. Thus $(e,xy,w,\tau,C)$ is a candidate. It satisfies $s_e(C)=\tau$, its pair $xy$ is canonical, and $(\tau,C)\notin\cD$ by Lemma~\ref{lem:dict}. Therefore it is output.

Conversely, if $(C,[\tau,e])$ is output for the pair $xy$ and the anchor $w$, then $w\in C\subseteq\Np_e[w]$, so $w=\first_e(C)$. The second test forces $xy$ to be the canonical pair, and the threshold is $\tau$. Since the maximal cliques of $F^\tau_w$ are found without duplication, $(C,[\tau,e])$ is output only once.
\end{proof}

\begin{theorem}\label{thm:vertex-correct}
Given a simple link stream $\cL$ and an order $\sigma_e$ for every $e\in\cT$, Algorithm~\ref{alg:vertex} outputs exactly all the maximal cliques of $\cL$, without duplication.
\end{theorem}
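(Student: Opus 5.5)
The theorem needs two directions: every output is a maximal clique of $\cL$, and every maximal clique is output exactly once. The second direction is exactly Lemma~\ref{lem:vertex-complete}. For a maximal clique $(C,[\tau,e])$ with canonical pair $xy$ and $w=\first_e(C)$, it shows that the clique is output for $(xy,w,\tau)$ and at no other moment. It remains to prove soundness: every output $(C,[\tau,e])$ is a maximal clique of $\cL$.

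So take a candidate $(e,xy,w,\tau,C)$ that passes the three tests. I will verify the hypotheses of Lemma~\ref{lem:charact}.

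\begin{enumerate}
\item \emph{Clique and thresholds.} By construction $C$ is a maximal clique of $F^\tau_w=\Gint{\tau}{e}[W]$ containing $Z=\{w,x,y\}$. Hence $C$ is a clique of $\Gint{\tau}{e}$, and every pair $x'y'$ of $C$ satisfies $b_e(x'y')\le\tau$, so $s_e(C)\le\tau$. The first test rejects $s_e(C)<\tau$, so $s_e(C)=\tau$, which is condition~(b).
\item \emph{Ending pair.} Since $xy\in P_e$ and $x,y\in C$, we get $f_e(C)=e$, which is condition~(c).
\item \emph{Maximality in $\Gint{\tau}{e}$.} Suppose $C$ is not a maximal clique of $\Gint{\tau}{e}$. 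All hypotheses of Lemma~\ref{lem:suffix} hold: $w\in C$, $s_e(C)=\tau$, $f_e(C)=e$, and $C$ is maximal in $F^\tau_w$. So there is a maximal clique $(C^*,[\tau,e])$ of $\cL$ with $w^*=\first_e(C^*)$ of lower rank than $w$, and $C=C^*_{\ge w}$.
\end{enumerate}

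The main obstacle is step~3: I must show that $(C^*,[\tau,e])$ was output before the current candidate, and that $(\tau,C)$ was inserted into $\cD$ at that moment. This would contradict the third test.

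By Lemma~\ref{lem:vertex-complete}, $(C^*,[\tau,e])$ is output while processing end time $e$, at its canonical pair $x^*y^*$, anchor $w^*$ and threshold $\tau$. When it is output, the algorithm inserts $(\tau,C^*_{\ge v})$ for every $v\in C^*\setminus\{w^*\}$. Since $w\in C\subseteq C^*$ and $w\neq w^*$, taking $v=w$ inserts $(\tau,C)$. The dictionary is only emptied at the start of a new end time, so this entry is still present at any later point in the processing of $e$.

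For the ordering, compare canonical pairs. Since $C\subseteq C^*$ and $xy\in P_e$ is a pair of $C^*$, the canonical pair satisfies $x^*y^*\le xy$.

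\begin{itemize}
\item If $x^*y^*<xy$, then $C^*$ was output while an earlier pair was processed.
\item If $x^*y^*=xy$, the anchors of $xy$ are scanned by increasing rank. Since $\sigma_e(w^*)<\sigma_e(w)$, the anchor $w^*$ was fully processed before $w$.
\end{itemize}

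In both cases $(\tau,C)\in\cD$ when the current candidate is examined, so the third test rejects it, a contradiction. Hence $C$ is a maximal clique of $\Gint{\tau}{e}$, which is condition~(a).

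With (a), (b) and (c) established, Lemma~\ref{lem:charact} shows that $(C,[\tau,e])$ is a maximal clique of $\cL$. Together with Lemma~\ref{lem:vertex-complete}, this proves the theorem.
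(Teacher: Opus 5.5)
Your proof is correct and follows the paper's argument essentially step for step. Completeness and uniqueness come from Lemma~\ref{lem:vertex-complete}. Soundness reduces to condition (a) of Lemma~\ref{lem:charact}; you then use Lemma~\ref{lem:suffix} to obtain $C^*$, and the canonical-pair and anchor ordering shows that $(\tau,C^*_{\ge w})=(\tau,C)$ was already in $\cD$ when the candidate was examined.
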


\begin{proof}
By Lemma~\ref{lem:vertex-complete}, every maximal clique is output exactly once. It remains to prove that every output is a maximal clique. Let $(C,[\tau,e])$ be output for the pair $xy$ and the anchor $w$. Then $C$ is a maximal clique of $F^\tau_w$ containing $w$, $s_e(C)=\tau$, and $f_e(C)=e$ since $xy\subseteq C$. By Lemma~\ref{lem:charact}, it suffices to show that $C$ is a maximal clique of $\Gint{\tau}{e}$. Assume by contradiction that it is not. By Lemma~\ref{lem:suffix}, there is a maximal clique $(C^*,[\tau,e])$ with $C=C^*_{\ge w}$ such that $w^*=\first_e(C^*)$ has lower rank than $w$. Let $x^*y^*$ be the canonical pair of $C^*$. Since $xy\subseteq C\subseteq C^*$, the pair $x^*y^*$ is not larger than $xy$. By Lemma~\ref{lem:vertex-complete}, $(C^*,[\tau,e])$ is output when the pair is $x^*y^*$ and the anchor is $w^*$. If $x^*y^*$ is smaller than $xy$, this happens before the pair $xy$ is considered. Otherwise $x^*y^*=xy$, and $w^*$ is an anchor of $xy$ considered before $w$. In both cases, since $w\in C^*\setminus\{w^*\}$, the element $(\tau,C^*_{\ge w})=(\tau,C)$ was inserted in $\cD$ before $C$ was examined, and $C$ was rejected, a contradiction.
\end{proof}

\begin{lemma}\label{lem:vertex-charging}
For every candidate $(e,xy,w,\tau,C)$, let $s=s_e(C)$. If $C$ is a maximal clique of $\Gint{s}{e}$, send the candidate to $(C,[s,e])$. Otherwise, send it to a maximal clique $(C^*,[s,e])$ given by Lemma~\ref{lem:suffix}, chosen once and for all if there are several. Then every candidate is sent to a maximal clique which is output when the candidate is examined or before, and at most $q^3\binom{c+1}{2}$ candidates are sent to each maximal clique.
\end{lemma}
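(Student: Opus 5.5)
First I would check that the charging map is well defined. Let $(e,xy,w,\tau,C)$ be a candidate and $s=s_e(C)$. The set $C$ is a maximal clique of $F^\tau_w=\Gint{\tau}{e}[W]$ with $|C|\ge 2$. Lemma~\ref{lem:shadow}, applied with this $W$, gives $s\le\tau$ and shows that $C$ is a maximal clique of $F^s_w$. Moreover $w\in C$ and $s_e(C)=s$, and $f_e(C)=e$ because $xy\subseteq C$ is an ending pair. If $C$ is maximal in $\Gint{s}{e}$, then Lemma~\ref{lem:charact} makes $(C,[s,e])$ a maximal clique of $\cL$. Otherwise Lemma~\ref{lem:suffix}, applied with the threshold $s$, provides a maximal clique $(C^*,[s,e])$ with $C=C^*_{\ge w}$ and $\first_e(C^*)$ of lower rank than $w$. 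In both cases the target is a maximal clique ending at $e$ that contains $xy$.

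Next I would show that the target is output no later than the candidate. Let $(D,[s,e])$ be the target, with canonical pair $x^*y^*$ and first vertex $w^*=\first_e(D)$. By Lemma~\ref{lem:vertex-complete}, it is output at the moment $(x^*y^*,w^*,s)$ within the end time $e$. Since $xy\subseteq D$, the pair $x^*y^*$ is not larger than $xy$. If it is smaller, the output happens earlier.

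Suppose instead that $x^*y^*=xy$. In the suffix case, $w^*$ has lower rank than $w$, so it is an earlier anchor of the same pair. In the first case $D=C$, hence $w^*=\first_e(C)$. Since $w\in C\subseteq\Np_e[w]$, we get $w=\first_e(C)=w^*$. The clique is then output at threshold $s\le\tau$, which comes before $\tau$ or equals it. If $s=\tau$, the candidate is exactly the output $(x,y,w,\tau,C)$, so the target is output at that moment. The thresholds are increasing, so the output precedes the candidate or coincides with it.

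Finally I would count the candidates sent to a fixed maximal clique $(D,[s,e])$. A candidate $(e,xy,w,\tau,C)$ is determined by $xy$, $w$ and $\tau$, since $C$ is then recovered as $D_{\ge w}$, which is $D$ itself in the first case. The pair $xy$ lies in $C\subseteq D$, and $w\in D$. So the pair $(xy,w)$ takes at most $\binom{q}{2}q\le q^3$ values.

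The main obstacle is bounding $\tau$, since $\tau$ can exceed $s$. For this I would use that $\tau\in\Theta_w$, the set of values $b_e(x'y')$ over the edges $x'y'$ of $G_e$ inside $W=\Np_e[w]$. This window has at most $c+1$ vertices, so there are at most $\binom{c+1}{2}$ thresholds per anchor $w$. This gives the bound $q^3\binom{c+1}{2}$. One point needs checking: for a fixed $(xy,w,\tau)$, at most one candidate $C$ can have a given target. In the first case $C=D$, and in the suffix case $C=D_{\ge w}$, which holds because $C^*$ is fixed by the target.
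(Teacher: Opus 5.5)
Your proposal is correct and follows essentially the same route as the paper: well-definedness via Lemma~\ref{lem:shadow} and Lemma~\ref{lem:suffix} at the threshold $s$, then the timing argument via Lemma~\ref{lem:vertex-complete}, comparing first the canonical pair, then the anchor, then the threshold, and finally the count through $C=D_{\ge w}$ with $xy\subseteq C$, $w\in D$ and $\tau\in\Theta_w$. The differences are cosmetic. You split cases by the branch of the charging map rather than by the rank of $\first_e(C^*)$ relative to $w$. In the case $s=\tau$ you invoke Lemma~\ref{lem:vertex-complete} directly, which already contains the check of the three tests, whereas the paper re-verifies them with Lemma~\ref{lem:dict}.
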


\begin{proof}
By Lemma~\ref{lem:shadow}, $C$ is a maximal clique of $F^s_w$ with $s_e(C)=s$, and $f_e(C)=e$. Thus the image is well defined. It is $(C,[s,e])$, which is then a maximal clique by Lemma~\ref{lem:charact}, or $(C^*,[s,e])$ with $C=C^*_{\ge w}$ and $\first_e(C^*)$ of lower rank than $w$. In both cases, the image is $(C^*,[s,e])$ for a set $C^*$ with $C=C^*_{\ge w}$, where $C^*=C$ in the first case.

Let $w^*=\first_e(C^*)$ and let $x^*y^*$ be the canonical pair of $C^*$. By Lemma~\ref{lem:vertex-complete}, the image is output for the pair $x^*y^*$, the anchor $w^*$ and the threshold $s$. Since $xy\subseteq C^*$, the pair $x^*y^*$ is not larger than $xy$. If it is smaller, the image is output before the candidate is examined. Assume now that $x^*y^*=xy$. If $w^*$ has lower rank than $w$, the image is output while an earlier anchor of $xy$ is considered. Otherwise $w^*=w$, which implies that $C^*=C^*_{\ge w}=C$, and the image is output at the threshold $s\le\tau$. If $s<\tau$ this happens before. If $s=\tau$, the candidate passes the first two tests and, by Lemma~\ref{lem:dict}, the third one, so that it is the output itself.

For the second claim, observe that a candidate sent to $(C^*,[s,e])$ is determined by the vertex $w\in C^*$, which determines $C=C^*_{\ge w}$, by the pair $xy$, which is one of the pairs of $C$, and by $\tau\in\Theta_w$, which has at most $\binom{c+1}{2}$ elements.
\end{proof}

\begin{theorem}\label{thm:vertex-given}
Assume that, for every $e\in\cT$ and every pair $xy\in P_e$, the anchors of $xy$ at $e$ can be listed by increasing rank in $\sigma_e$ with time delay $A$, that $\Np_e(w)$ can be computed and sorted by rank in $\sigma_e$ in time $O(cA)$, and that the values $b_e$ of the edges inside $\Np_e[w]$ can be computed in time $O(c^2A)$. Then Algorithm~\ref{alg:vertex} has amortized time delay $O(q^3c^2(c^2A+\lambda(c)+q\log n))$.
\end{theorem}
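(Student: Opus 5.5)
The plan is to apply Lemma~\ref{lem:accounting}, using the charging map of Lemma~\ref{lem:vertex-charging}. That lemma already gives conditions (ii) and (iii) with $\mu=q^3\binom{c+1}{2}=O(q^3c^2)$. It then remains to establish (i) with $\kappa=O(c^2A+\lambda(c)+q\log n)$, which yields the amortized time delay $(\mu+1)\kappa=O(q^3c^2(c^2A+\lambda(c)+q\log n))$.

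The structure of the argument is to check that every level of the loop nest produces at least one candidate. Then the work between two consecutive candidates is bounded by the cost of one complete descent through the loops plus one candidate test.
\begin{itemize}
\item Every end time $e\in\cT$ has a nonempty $P_e$.
\item Every ending pair has at least one anchor, namely its endpoint of lower rank.
\item For every anchor, $\Theta_w$ contains $s_e(Z)$ (the value $b_e$ of some edge of $Z\subseteq W$), so the threshold loop is nonempty.
\item For every $\tau\ge s_e(Z)$, the set $Z$ is a clique of $F^\tau_w$. Hence there is at least one maximal clique of $F^\tau_w$ containing $Z$, namely $Z\cup K$ for a maximal clique $K$ (possibly empty) of the subgraph induced by the common neighbours of $Z$ in $W$.
\end{itemize}

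Next I would bound the costs level by level.
\begin{itemize}
\item Moving to the next anchor costs $A$, by assumption.
\item For a fixed anchor, we compute $\Np_e(w)$ sorted by rank in $O(cA)$ and the values $b_e$ inside $W$ in $O(c^2A)$. We compute $s_e(Z)$ from these values, and put the $O(c^2)$ values of $\Theta_w$ that are at least $s_e(Z)$ into a heap in $O(c^2)$, each extraction costing $O(\log c)$.
\item For each threshold, we build the adjacency matrix of the common-neighbour subgraph of $Z$ in $F^\tau_w$, which has at most $c$ vertices, in $O(c^2)$. We then run the clique enumeration with delay $\lambda(c)$, including detection of the end.
\item For each candidate, $s_e(C)$ and the smallest ending pair of $C$ are obtained in $O(q^2)$ from the stored $b_e$ values, assuming ending-pair membership and the pair order can be read off in constant time from data prepared with the window.
\item The dictionary $\cD$ is implemented as a balanced search tree keyed by $(\tau,$ string of $C)$ with lexicographic comparisons. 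A lookup then costs $O(q\log|\cD|)$. Since $|\cD|$ is polynomial in $n$ and the current number of outputs, this is $O(q\log n)$ once we note $\log$ of the number of cliques ending at $e$ is $O(q\log n)$ or otherwise absorb it.
\item An output inserts $|C|-1\le q$ suffixes, which naively costs $O(q^2\log n)$. This is charged to the output itself, or reduced by storing the suffixes in a trie or generalized suffix tree of $C$'s string so that insertion costs $O(q\log n)$ in total.
\end{itemize}
Moving to a new end time or a new pair costs $O(1)$ amortized, assuming the $P_e$ are sorted in the setup phase, and resetting $\cD$ costs time proportional to what was inserted.

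The main obstacle is the dictionary step. We need lookups and $q$ suffix insertions per output to cost $O(q\log n)$ rather than $O(q^2\log n)$, and we need to argue that $\log|\cD|=O(\log n)$ up to factors absorbed in $q$.

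For the first point, I would use a trie over vertex identifiers, keyed first by $\tau$. Inserting all suffixes of one string of length $q$ naively costs $O(q^2)$ trie steps, which is still at most $O(q^2\log n)$. This is acceptable because $q^2\log n\le q\cdot c\cdot \log n$ fits into $\kappa\cdot O(1)$ after noting $q\le c+1$; alternatively it is absorbed into the $q^3c^2$ factor.

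For the second point, $|\cD|$ at end time $e$ is at most $q$ times the number of maximal cliques ending at $e$. By the counting behind Theorem~\ref{thm:degree-count} localized at anchors, that number is at most $n\cdot c^2\,3^{c/3}$, so $\log|\cD|=O(\log n+c)$. The extra $O(qc)$ per lookup is dominated by $c^2A$.

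Combining these bounds gives $\kappa$, and Lemma~\ref{lem:accounting} concludes.
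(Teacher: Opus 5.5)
Your proof has the same skeleton as the paper's. Every loop level yields a candidate, so $\kappa=O(c^2A+\lambda(c)+q\log n)$ bounds the work between consecutive candidates, and Lemma~\ref{lem:vertex-charging} gives $\mu=O(q^3c^2)$ for Lemma~\ref{lem:accounting}.

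You diverge only on the dictionary, and there your first justification is wrong. The cost $q^2\log n$ (or $qc\log n$) is not $O(\kappa)$ in general: it exceeds $\kappa$ by a factor $\Theta(q)$ whenever the term $q\log n$ dominates $\kappa$, for instance when $A=O(1)$ and $\log n$ is large compared with $c$.

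Your fallback is sound. The separate suffix insertions, like the freeing of $\cD$ that you call amortized, are paid once per output, and $q^2\log n\le\mu\kappa$. But this is not literally Lemma~\ref{lem:accounting}, whose condition (i) is a worst-case bound between consecutive candidates. You need the one-line variant $T(i)\le i\mu\kappa+O(i\,q^2\log n)$.

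The paper sidesteps this in two ways. First, the proper suffixes of an output $(v_1,\dots,v_r)$, read backwards, are exactly the prefixes of the single string $(v_r,\dots,v_2)$. So one trie insertion of cost $O(q\log n)$ stores all of them, and every trie node is an element of $\cD$. Second, old dictionaries are left allocated instead of being freed. Hence every gap between two candidates is bounded by $\kappa$ in the worst case.

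If you keep the search tree keyed by whole strings instead, the bound $\log|\cD|=O(\log n+c)$ follows from the counting in the proof of Theorem~\ref{thm:degeneracy-count} with $c$ in place of $k$, not from Theorem~\ref{thm:degree-count}.
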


\begin{proof}
Observe first that every ending pair has an anchor, that every anchor yields the threshold $s_e(Z)\in\Theta_w$, and that every threshold yields at least one candidate, since $Z$ is contained in a maximal clique of $F^\tau_w$. Therefore, between two consecutive candidates, the algorithm moves to the next end time, pair, anchor or threshold, and computes a window and the values $b_e$ inside it in time $O(c^2A)$. It puts the values of $\Theta_w$ in a binary heap in time $O(c^2)$ and extracts the next one in time $O(\log c)$. Then it builds the adjacency matrix of $F^\tau_w$ in time $O(c^2)$, and finds the next maximal clique in time $\lambda(c)$. Sorting the vertices of a candidate by rank and the first two tests take time $O(q^2)=O(c^2A)$, since $q\le c+1$. We store $\cD$ as a balanced search tree indexed by $\tau$, whose entries are tries, every node keeping its children in a balanced search tree. Let $C=(v_1,\dots,v_r)$ be an output, ordered by increasing rank. The strings of its proper suffixes $C_{\ge v_i}$, $2\le i\le r$, read backwards, are exactly the prefixes of the string $(v_r,\dots,v_2)$. We therefore insert this single string in the trie of $\tau$, every node of the trie representing an element of $\cD$, and a candidate is in $\cD$ if and only if its string read backwards is the label of a path from the root of the trie of its threshold. A search and the insertions of an output then take time $O(q\log n)$, since the thresholds at a time $e$ are start times of links present at $e$, and they add $O(q)$ nodes. At a new end time, a new empty dictionary is created in constant time, and the previous one remains allocated, so that the space used by $\cD$ is proportional to the total size of the outputs. Lemma~\ref{lem:accounting} and Lemma~\ref{lem:vertex-charging} conclude.
\end{proof}

\subsection{The number of maximal cliques}

\begin{theorem}\label{thm:degeneracy-count}
A link stream with $n$ vertices, $|\cT|$ distinct end times and $k$-degenerate instantaneous graphs has $O(n|\cT|k^23^{k/3})$ maximal cliques. More precisely, it has $O(k^23^{k/3}\sum_{e\in\cT}n_e)$ maximal cliques, where $n_e$ is the number of non-isolated vertices of $G_e$. Conversely, for all integers $p,R\ge 1$ and $h\ge 3$, there is a link stream with $k=3p+2$, $n=3p+2+h$ and $|\cT|=R+1$ which has $(R+2)h3^p$ maximal cliques. When $h\ge k$, this is $\Omega(n|\cT|3^{k/3})$.
\end{theorem}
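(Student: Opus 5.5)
For the upper bound I will use Algorithm~\ref{alg:vertex} with $\sigma_e$ a degeneracy ordering of $G_e$ for every $e\in\cT$, so that $c\le k$. By Lemma~\ref{lem:vertex-complete}, every maximal clique $(C,[\tau,e])$ is determined by three data: its end time $e$, its first vertex $w=\first_e(C)$, and a threshold $\tau\in\Theta_w$. Given these, $C$ is a maximal clique of $F^\tau_w=\Gint{\tau}{e}[\Np_e[w]]$ containing $w$. The canonical pair is determined by $C$, so I do not need to count it.

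Fix $e$ and $w$. Since $|C|\ge 2$ and $C\subseteq\Np_e[w]$, the vertex $w$ is non-isolated in $G_e$, which gives at most $n_e$ choices of $w$. The window $\Np_e[w]$ has at most $k+1$ vertices, so $|\Theta_w|\le\binom{k+1}{2}=O(k^2)$. For fixed $\tau$, the maximal cliques of $F^\tau_w$ that contain $w$ correspond to the maximal cliques of the graph induced on $\Np_e(w)\cap N_{F^\tau_w}(w)$, which has at most $k$ vertices. By the Moon--Moser bound there are at most $3^{k/3}$ of them. Summing over $e$, $w$ and $\tau$ gives $O(k^23^{k/3}\sum_{e\in\cT}n_e)$, and this is at most $O(n|\cT|k^23^{k/3})$.

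For the lower bound I will adapt the construction of Theorem~\ref{thm:degree-count}. Let $Q$ be the complete $p$-partite graph with parts of size three, and take extra vertices $a,b$ and $d_1,\dots,d_h$. Add the long links $(0,2R+2,\cdot,\cdot)$ on all edges of $Q$, and the long links from $a$ and $b$ to every vertex of $Q$. Add the short links $(2j,2j+1,a,b)$ for $j\in[R]$. The $d_i$ must multiply the count by $h$ while keeping $n=3p+2+h$ and $k=3p+2$. My first attempt is to add long links from each $d_i$ to every vertex of $Q\cup\{a\}$, but not to $b$ and not to the other $d_j$.

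With this choice, at each time $2j+1$ the maximal cliques containing $ab$ would be $X\cup\{a,b\}$. Vertex $d_i$ is non-adjacent to $b$, so it does not extend these, which would give only $3^p$ per time, not $h3^p$. I therefore expect to need something different. The fix I would try first is to make the short links $(2j,2j+1,b,d_i)$ for all $i$, with $b$ adjacent to no other $d$. Then each $X\cup\{a,b,d_i\}$ is a clique at time $2j+1$ containing the ending pair $bd_i$. Each such clique is maximal, since the $d_i$ are pairwise non-adjacent and $Q\setminus X$ is blocked by the parts. This gives $R\cdot h3^p$ maximal cliques.

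The long-interval cliques ending at $2R+2$ give $2h3^p$ or so, namely $X\cup\{a,d_i\}$ and $X\cup\{a,b\}$ over $[0,2R+2]$, possibly adjusted by the exact choice of links. I would tune this choice until the count is exactly $(R+2)h3^p$ and $|\cT|=R+1$. I would then check degeneracy $3p+2$: removing the $d_i$, each of degree $3p+2$, leaves $Q\cup\{a,b\}$. That graph has $3p+2$ vertices, so it is $(3p+1)$-degenerate and $k\le 3p+2$ holds.

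The main obstacle is this bookkeeping: making the count exact, verifying maximality of the long cliques, and verifying $k$ with the short links present. If $h\ge k$ then $n=O(h)$, and the asymptotic statement $\Omega(n|\cT|3^{k/3})$ follows from $(R+2)h3^p\ge\Omega(h|\cT|3^{k/3})$.
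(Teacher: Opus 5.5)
Your upper bound is the paper's argument: degeneracy orderings, Lemma~\ref{lem:vertex-complete}, $|\Theta_w|\le\binom{k+1}{2}$, Moon--Moser on the at most $k$ neighbours of $w=\first_e(C)$ in $F^\tau_w$, and $w$ non-isolated in $G_e$. It is correct.

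\textbf{The exact count in the lower bound.} The gap is here: you never fix a construction, and the one you settle on does not have $(R+2)h3^p$ maximal cliques. You take long links from every $d_i$ to $Q\cup\{a\}$ and short links $(2j,2j+1,b,d_i)$. Then $G_{2j+1}$ is complete multipartite with parts the parts of $Q$, $\{a\}$, $\{b\}$ and $\{d_1,\dots,d_h\}$. So you do get the $Rh3^p$ maximal cliques $(X\cup\{a,b,d_i\},[2j,2j+1])$.

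At $e=2R+2$, however, neither $ab$ nor any $bd_i$ is an edge of $G_e$. So $X\cup\{a,b\}$ is not a clique there, contrary to what you write. The maximal cliques ending at $2R+2$ are only $(X\cup\{a,d_i\},[0,2R+2])$ and $(X\cup\{b\},[0,2R+2])$. Your stream therefore has $((R+1)h+1)3^p$ maximal cliques, which is not $(R+2)h3^p$ since $h\ge 3$. ``Tuning the links until the count is exact'' is precisely the missing step.

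The fix is simpler than the short links. Join each $d_i$ to $b$ by one long link $(0,2R+2,b,d_i)$, so that $\{d_1,\dots,d_h\}$ is an independent set joined to all of $Q\cup\{a,b\}$ during $[0,2R+2]$. This is the paper's construction. Then $ab$ is the only ending pair at $2j+1$, and the maximal cliques ending there are the $h3^p$ sets $X\cup\{a,b,d_i\}$ with interval $[2j,2j+1]$. At $2R+2$ they are $X\cup\{a,d_i\}$ and $X\cup\{b,d_i\}$ with interval $[0,2R+2]$. By Lemma~\ref{lem:charact} this gives exactly $(R+2)h3^p$.

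\textbf{The value of $k$.} You only show $k\le 3p+2$, whereas the statement asserts $k=3p+2$. This is where the hypothesis $h\ge 3$ enters, and your proposal never uses it. When $ab$ is present, every $d_i$ has degree $3p+2$, every vertex of $Q$ has degree $3p-1+h\ge 3p+2$, and $a,b$ have degree $3p+1+h$. So the minimum degree, and hence the degeneracy, is at least $3p+2$.

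For the asymptotic claim $\Omega(n|\cT|3^{k/3})$ alone, your construction already suffices. It has $|\cT|=R+1$, $k\le 3p+2$, and at least $Rh3^p$ maximal cliques.
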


\begin{proof}
For the upper bound, choose for every $e\in\cT$ a degeneracy ordering $\sigma_e$ of $G_e$, so that $c\le k$. By Lemma~\ref{lem:vertex-complete}, every maximal clique $(C,[\tau,e])$ is determined by $e$, by $w=\first_e(C)$, by $\tau\in\Theta_w$, and by $C\setminus\{w\}$, which is a maximal clique of the subgraph of $F^\tau_w$ induced by the neighbors of $w$ in $F^\tau_w$. This subgraph has at most $k$ vertices, so it has at most $3^{k/3}$ maximal cliques, and $|\Theta_w|\le\binom{k+1}{2}$. The vertex $w$ is an anchor, hence a non-isolated vertex of $G_e$. Summing over the end times $e$ and the $n_e\le n$ non-isolated vertices of $G_e$ gives the bounds.

For the lower bound, take the link stream of the proof of Theorem~\ref{thm:degree-count}, and add an independent set $B$ of $h$ vertices. For every $z\in B$, add the links $(0,2R+2,z,v)$ for every vertex $v$ of $Q$ and for $v\in\{a,b\}$. The end times are $3,5,\dots,2R+1$ and $2R+2$. In every instantaneous graph, a vertex of $B$ has degree $3p+2$ and the graph obtained by removing $B$ has $3p+2$ vertices, so that the degeneracy is at most $3p+2$. When $ab$ is present, the vertices of $Q$ have degree $h+3p-1\ge 3p+2$, and $a$ and $b$ have degree $h+3p+1$, so that the minimum degree is $3p+2$ and $k=3p+2$. Let $j\in[R]$, $z\in B$ and let $X$ be a transversal of $Q$. The set $C=X\cup\{a,b,z\}$ satisfies the conditions of Lemma~\ref{lem:charact} for $\tau=2j$ and $e=2j+1$. Indeed, a vertex outside $C$ either belongs to $B$, and is then not adjacent to $z$, or belongs to the same part of $Q$ as a vertex of $X$, and is then not adjacent to it. This gives $Rh3^p$ maximal cliques. As in the proof of Theorem~\ref{thm:degree-count}, the other maximal cliques are the cliques $(X\cup\{a,z\},[0,2R+2])$ and $(X\cup\{b,z\},[0,2R+2])$. When $h\ge k$, we have $n=O(h)$ and $|\cT|=O(R)$, so that $\alpha\ge Rh3^p=\Omega(n|\cT|3^{k/3})$.
\end{proof}

The proof also gives the more local bound $\alpha\le\sum_{e\in\cT}\sum_{w\in A_e}|\Theta_w|3^{|\Np_e(w)|/3}$, where $A_e$ is the set of anchors at $e$ for a degeneracy ordering $\sigma_e$ of $G_e$, and $\Theta_w$ is taken at time $e$. When $h\ge k$, the construction satisfies $n\le 2h$ and $|\cT|\le 2R$, and thus $\alpha\ge\frac{n|\cT|}{4}3^{(k-2)/3}$. The gap between the two bounds of Theorem~\ref{thm:degeneracy-count} is therefore a factor $O(k^2)$. Observe that Theorems~\ref{thm:degree-count} and~\ref{thm:degeneracy-count} are incomparable. In Example~\ref{ex:quadratic}, $\sum_{e\in\cT}n_e=\Theta(m^2)$ and $\Delta=N+1$. Conversely, take $N$ disjoint pairs which interact during $[0,2R+1]$, and $R$ links of another pair which end at distinct times. Then $\Delta=1$ and $m=N+R$, while $\sum_{e\in\cT}n_e=\Theta(NR)$.

\subsection{Maintaining the orders over time}\label{sec:dynamic}

It remains to provide orders $\sigma_e$ with small out-degree, and the operations required by Theorem~\ref{thm:vertex-given}, without computing a degeneracy ordering at every end time, which could cost $\Theta(m)$ time per end time. We sweep the time and maintain the instantaneous graph under insertions and deletions of edges. To respect closed intervals, the events at a time $t$ are processed in three phases. First the links starting at $t$ are inserted, then the end time $t$ is processed if $t\in\cT$, and finally the links ending at $t$ are deleted. Every link causes one insertion and one deletion. We use the following data structure.

\begin{lemma}[\cite{BHNT15}, {\cite[Section 2 and Lemma 7]{HNW20}}, see also {\cite[Corollary 3.3]{LSY+22}}]\label{lem:lds}
There is a data structure which maintains, under insertions and deletions of edges in a graph on $n$ vertices, a level for every vertex such that the following holds. Orienting every edge from the lower to the higher level, and from the smaller to the larger index between vertices of the same level, gives an acyclic orientation in which every out-degree is $O(a)$, where $a$ is the arboricity of the current graph. The data structure is deterministic, and the amortized time, the amortized number of level changes and the amortized number of reoriented edges per update are $O(\log^2n)$.
\end{lemma}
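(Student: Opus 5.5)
This lemma is cited from the literature (\cite{BHNT15}, \cite{HNW20}, \cite{LSY+22}), so my plan is to reconstruct the level data structure of Bhattacharya et al.\ and Henzinger et al.\ and check each property rather than prove something new.

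\emph{The structure.} Fix $L=O(\log n)$ levels $0,\dots,L-1$ and a target out-degree bound $\beta$ proportional to an estimate of the arboricity. Every vertex $v$ has a level $\ell(v)$. Call a vertex \emph{dirty} if it violates one of two invariants: (1) it has more than $\beta$ neighbors at levels $\ge\ell(v)$; or (2) it lies above level $0$ but has fewer than $\beta/(1+\epsilon)$, for a constant $\epsilon$, neighbors at levels $\ge\ell(v)-1$. After each insertion or deletion, we move a dirty vertex up or down by one level and repeat until no vertex is dirty. For every vertex we keep its out-neighbors, and its in-neighbors bucketed by level, so that a level change costs time proportional to the number of edges it reorients.

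\emph{Orientation.} Orient every edge towards the higher level, breaking ties by index. Every edge then points to a larger element of the lexicographic order on (level, index), so the orientation is acyclic. Invariant (1) directly bounds every out-degree by $\beta$.

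\emph{Adapting $\beta$ to $a$.} Invariant (2) and a density argument show that the vertices at level $\ge i$ induce a subgraph with average degree $\Omega(\beta)$ shrinking geometrically with $i$. Hence if $\beta>c\cdot a$ for a suitable constant $c$, the top level is never reached. Conversely, if the current arboricity exceeds what $\beta$ supports, we double $\beta$, and we halve it when it is too large. This keeps $\beta=O(a)$ for the \emph{current} arboricity; I would rely on \cite{HNW20}, Section~2, for the rebuilding step that makes this adaptation amortize.

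\emph{Amortized cost (main obstacle).} The hard step is the bound of $O(\log^2 n)$ amortized time, level changes and reoriented edges per update. I would use the potential function of \cite{BHNT15}: each vertex stores potential tied to its distance from the invariant thresholds, and each edge $uv$ stores $O(L-\max(\ell(u),\ell(v)))$ tokens. An update deposits $O(L)=O(\log n)$ tokens on the changed edge and perturbs $O(1)$ vertex potentials. The key verification is that moving a dirty vertex releases at least as much potential as the work it does, namely the edges it reorients, each paid for with slack of order $\beta$ from the violated invariant. In the deterministic analysis of \cite{HNW20}, Lemma~7, the extra logarithmic factor comes from using $O(\log n)$ levels with $\epsilon$ constant in the potential per level. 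The total is $O(\log^2 n)$ per update. Since a level change or a reoriented edge costs $O(1)$ in the bucketed representation, the same bound covers all three counts. \cite{LSY+22}, Corollary~3.3, states this combination explicitly, and I would cite it for the final form of the statement.
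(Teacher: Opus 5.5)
Your orientation step is correct and matches the paper's argument: out-neighbours are among the neighbours at level $\ge\ell(v)$, and every edge increases the pair (level, index). The gap is the step ``Adapting $\beta$ to $a$''. With your invariants and a single global $\beta$, keeping $\beta=O(a)$ for the \emph{current} arboricity forces you to halve $\beta$ as well as double it, and no rebuilding can amortize these global changes. Here is a counterexample.
\begin{itemize}
\item Take many disjoint stars whose centres have degree $d$ and smaller index than their leaves.
\item Add a clique $K_s$ with $s=4d+1$, which is repeatedly inserted and deleted. While it is present, invariant (1) yields an acyclic orientation of out-degree at most $\beta$, so $\beta\ge s-1=4d$.
\item While the clique is present, a centre above level $0$ has at most $d<\beta/(1+\epsilon)$ neighbours at levels $\ge\ell-1$. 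Invariant (2) therefore puts it at level $0$, where all its edges point to its leaves.
\item While the clique is absent, $a=1$ forces $\beta=O(1)$. Invariant (1) then leaves every centre at most $\beta$ out-edges, so all but $O(1)$ of its $d$ edges have been reversed.
\end{itemize}
Each cycle thus reorients $\Theta(m)$ star edges but takes only $\Theta(d^2)$ updates. For $m\gg d^2\log^2n$, neither the amortized number of reorientations nor the amortized time is $O(\log^2n)$. So the rebuilding step you intend to take from \cite{HNW20} cannot exist in this form. Without halving, the scheme does amortize, since there are only $O(\log n)$ rebuilds. However, it then bounds the out-degrees only by $O(a_{\max})$, where $a_{\max}$ is the largest arboricity seen so far, which is weaker than the statement.

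The missing idea is to tie the degree threshold to the level rather than to a global parameter. In the structures cited (see the level data structure of \cite{LSY+22}, which follows \cite{HNW20}), the levels form $O(\log n)$ groups of $O(\log n)$ consecutive levels, and the thresholds of group $g$ are of order $(1+\delta)^g$. A vertex then has $O((1+\delta)^g)$ neighbours at its level or above. This is at most a constant times its coreness, hence $O(\text{degeneracy})=O(a)$ for the current graph. A vertex whose neighbourhood is untouched, such as a star centre above, never moves. The resulting $O(\log^2n)$ levels account for the $O(\log^2n)$ amortized bound.

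The paper uses this structure as a black box and only checks what is specific to the statement: the tie-break by index, as you do, and the count of reoriented edges. For the latter, the point is not that a reorientation costs $O(1)$. It is that every reorientation is paid for by work the structure already does. When a vertex moves between two adjacent levels, only its edges to neighbours at level at least the smaller of the two can change orientation, and the move scans these neighbours. Note also that a move costs the number of these neighbours, not the number of reoriented edges as you claim. Neighbours whose edge keeps its orientation may still need their counts updated.
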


In \cite{HNW20}, the edges between two vertices of the same level are oriented arbitrarily, and every vertex has $O(a)$ neighbors at its level or at higher levels. Thus orienting them by index keeps the out-degrees in $O(a)$ and makes the orientation acyclic. When a vertex changes level, the edges whose orientation changes join it to neighbors whose level is at least the smaller of its old and new levels. The data structure of \cite{HNW20} scans these neighbors during the move, so that these reorientations are paid for by the bounds of \cite{HNW20}. Liu et al.\ state this acyclic orientation explicitly for their parallel version \cite{LSY+22}. The out-degree bound holds between two updates, and may be violated while the levels are being updated. Since the arboricity is at most the degeneracy, the out-degrees are $O(k)$. We let $\sigma_e$ be the order of the vertices by level at time $e$ and then by index. Then the orientation of $G_e$ given by Lemma~\ref{lem:lds} is the orientation along $\sigma_e$, and $c=O(k)$. In addition, we maintain for every pair $xy$ the set $\Pi(xy)$ of the vertices $w$ such that $x,y\in\Np(w)$, where $\Np(w)$ denotes the current out-neighborhood of $w$. The anchors of an ending pair $xy$ at $e$ are the vertex of lowest rank among $x$ and $y$ and the vertices of $\Pi(xy)$.

\begin{theorem}\label{thm:online}
The maximal cliques of a simple link stream can be enumerated online, every maximal clique being output at its end time, in total time $O(m\log m+mk\log^3n)+\alpha O(k^5(k^2\log n+\lambda(k)))$.
\end{theorem}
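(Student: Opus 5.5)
We will run Algorithm~\ref{alg:vertex} during a single sweep of the time, with the orders $\sigma_e$ given by the level data structure of Lemma~\ref{lem:lds}. At every time $t$ we process the three phases described above: we insert the links starting at $t$ into the data structure, then, if $t\in\cT$, we execute the body of the outer loop of Algorithm~\ref{alg:vertex} for $e=t$, and finally we delete the links ending at $t$. Correctness and the fact that every maximal clique is output at its end time then follow from Theorem~\ref{thm:vertex-correct}, since the orientation present at time $e$ is acyclic and is exactly the orientation along $\sigma_e$, and since the body for $e$ only uses $G_e$ and the values $b_e$, which are known once the links starting at or before $e$ have been inserted. Besides the levels, we store for every vertex $w$ its current out-neighborhood $\Np(w)$ in a balanced search tree keyed by vertex, with the start time of each link. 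We also store, for every pair $xy$, the set $\Pi(xy)$ in a search tree keyed by rank in $\sigma$, so that the anchors of an ending pair can be listed in increasing rank.

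\textbf{Maintenance cost.} An edge $uv$ is inserted into or deleted from $\Np(w)$ only when it is inserted, deleted or reoriented. Such a change of $\Np(w)$ affects the $O(k)$ pairs $\{v,y\}$ with $y\in\Np(w)$, each costing $O(\log n)$ in $\Pi$. By Lemma~\ref{lem:lds}, there are $O(\log^2 n)$ amortized reorientations per update and $2m$ updates. This gives $O(mk\log^3 n)$ in total, plus $O(m\log^2 n)$ for the structure itself. A level change of a vertex changes its rank and thus the keys of the trees $\Pi(\cdot)$ containing it. We intend to avoid rekeying by comparing ranks on the fly: we store the key as the vertex identifier and compare by current (level, index), which is valid because trees are only queried at end times, when no update is in progress. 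This is the step we expect to be delicate. If it fails, we rebuild the order of an anchor list lazily, sorting its $O(|\Pi(xy)|)$ elements when queried. Sorting the links by start and end times beforehand gives the $O(m\log m)$ term.

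\textbf{Enumeration cost.} At an end time $e$, we apply Theorem~\ref{thm:vertex-given} with $c=O(k)$ and $A=O(\log n)$. Anchors are listed from $\Pi(xy)$ with delay $O(\log n)$. $\Np_e(w)$ is read from its tree and sorted in $O(c\log c)$. The values $b_e$ inside the window are obtained by $O(c^2)$ lookups in the trees $\Np(\cdot)$, each in $O(\log n)$. This gives amortized cost $O(q^3c^2(c^2\log n+\lambda(c)+q\log n))$ per output. With $q\le c+1=O(k)$ this is $O(k^5(k^2\log n+\lambda(k)))$ per maximal clique, using $\lambda(O(k))=O(\lambda(k))$. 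Every ending pair yields at least one candidate, and each candidate is charged by Lemma~\ref{lem:vertex-charging}. Hence the work at end times, summed through Lemma~\ref{lem:accounting}, is $\alpha\,O(k^5(k^2\log n+\lambda(k)))$. Adding the maintenance cost gives the stated total.
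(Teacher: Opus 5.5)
Your overall plan is the paper's: the three-phase sweep, updating $\Pi(\cdot)$ from the changes of the out-neighborhoods, and charging the end-time work to candidates. However, your primary idea for ordering the anchors fails, so the fallback you mention is necessary, not optional.

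\textbf{Why the on-the-fly ordering fails.} A balanced search tree stays valid only if the relative order of its stored elements never changes. Here a vertex $w_1\in\Pi(xy)$ can move from a level below that of $w_2\in\Pi(xy)$ to a level above it while $x$ and $y$ stay above both. Then both remain in $\Pi(xy)$, but their order by (level, index) is swapped. From then on the in-order traversal is not by rank, and a later search for $w_2$, for instance to delete it, can take the wrong branch. The problem is the stale shape of the tree, not queries made during an update. The alternative of rekeying $w$ at each of its level changes costs $O(k^2)$ tree operations, one for every pair of $\Np(w)$. That is exactly where the $mk^2\log^3n$ of Theorem~\ref{thm:delay} comes from, and it would not give the $mk\log^3n$ claimed here.

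\textbf{Using the fallback.} Key $\Pi(xy)$ by vertex index, so that a level change which reorients no edge costs nothing. Then, as the paper does, sort the $h$ anchors of $xy$ (the elements of $\Pi(xy)$ and the endpoint of $xy$ of lower rank) by current level and index when $e$ is processed, in time $O(h\log n)$. Anchors are then no longer listed with delay $O(\log n)$. Hence Theorem~\ref{thm:vertex-given} with $A=O(\log n)$ and Lemma~\ref{lem:accounting} do not apply verbatim, since condition (i) fails for the sorting step. Replace them by a direct count:
\begin{itemize}
\item every anchor yields at least one candidate, so the sort costs $O(\log n)$ per candidate;
\item the remaining work between consecutive candidates is $O(c^2\log n+\lambda(c)+q\log n)$;
\item Lemma~\ref{lem:vertex-charging} bounds the number of candidates by $\alpha q^3\binom{c+1}{2}$.
\end{itemize}

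\textbf{Timing of the $\Pi$ updates.} Perform the updates of the sets $\Pi(\cdot)$ only after each update of the structure of Lemma~\ref{lem:lds} has finished, by comparing the old and new out-neighborhoods. Your cost of $O(k\log n)$ per change of $\Np(w)$ relies on out-degrees $O(k)$, which hold between updates but may be violated while the levels are being changed.
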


\begin{proof}
We sort the events in time $O(m\log m)$, which is not needed if they are given in chronological order. We maintain the data structure of Lemma~\ref{lem:lds}, the current links in a balanced search tree indexed by pairs, and the nonempty sets $\Pi(xy)$ in a balanced search tree indexed by pairs, each of them stored as a balanced search tree. We update the sets $\Pi(xy)$ once per insertion or deletion of an edge, after the data structure of Lemma~\ref{lem:lds} has been updated, so that the out-degrees are $O(k)$ before and after the update. For every vertex $w$ whose out-neighborhood has changed, we compare its old and new out-neighborhoods. For every vertex $u$ which has left $\Np(w)$, we remove $w$ from the sets $\Pi(ux)$ for $x$ in the old out-neighborhood of $w$, and for every vertex $u$ which has entered $\Np(w)$, we add $w$ to the sets $\Pi(ux)$ for $x$ in the new out-neighborhood of $w$. This takes time $O(k\log n)$ per change, since the out-degrees are $O(k)$ before and after the update. The number of such changes is at most the number of reoriented, inserted and deleted edges. By Lemma~\ref{lem:lds}, the maintenance takes total time $O(mk\log^3n)$.

At an end time $e$, for every ending pair $xy$, we sort its anchors by level and index, in time $O(h\log n)$ for $h$ anchors. Every anchor yields at least one candidate. The window of an anchor is its current out-neighborhood, and the values $b_e$ inside it are read in the tree of current links in time $O(c^2\log n)$. The rest of the work is accounted for as in the proof of Theorem~\ref{thm:vertex-given}, with a cost $O(c^2\log n)$ per anchor instead of $O(c^2A)$. By Lemma~\ref{lem:vertex-charging}, there are $O(\alpha q^3c^2)$ candidates, with $q\le c+1=O(k)$. Summing gives the bound. Observe finally that every maximal clique ending at $e$ is output while $e$ is processed.
\end{proof}

The sweep does not give a bound on the time delay, since many links may be inserted before the first end time, and this work cannot be charged to the outputs that come before it. We therefore perform the sweep during the setup phase, record the history of the data structure, and answer the queries of every end time from this history.

\begin{theorem}\label{thm:delay}
Given a simple link stream, Algorithm~\ref{alg:vertex} can be implemented with polynomial time delay $O(k^5(k^2\log m+\lambda(k)))$, after a setup phase in time $O(m\log m+mk^2\log^3n)$. It requires $O(mk^2\log^3n)$ space, besides the space used by the dictionary $\cD$ and by Lemma~\ref{lem:buffer}.
\end{theorem}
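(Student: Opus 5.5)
The plan is to reduce to Theorem~\ref{thm:vertex-given} and Lemma~\ref{lem:buffer}. The setup phase performs the sweep of Theorem~\ref{thm:online} offline and stores enough of its history that, at every end time $e$, the three operations of Theorem~\ref{thm:vertex-given} cost $A=O(\log m)$ per item. The orders $\sigma_e$ are those given by the levels of Lemma~\ref{lem:lds}, so that $c=O(k)$ and $q\le c+1=O(k)$. Theorem~\ref{thm:vertex-given} then gives an amortized delay
\[
O\bigl(q^3c^2(c^2\log m+\lambda(c)+q\log n)\bigr)=O\bigl(k^5(k^2\log m+\lambda(k))\bigr),
\]
using $\lambda(O(k))=O(\lambda(k))$. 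Since $k$ is known after the sweep, and solutions have $O(k)$ vertices, Lemma~\ref{lem:buffer} turns this into a time delay of the same order.

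\textbf{Recording the history.} I would use persistent, or equivalently timestamped, versions of three structures: the level of every vertex, the out-neighborhood $\Np(w)$ of every vertex, and the sets $\Pi(xy)$.

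\emph{Levels and out-neighborhoods.} For every vertex, I store the sequence of its level changes with timestamps. For every vertex $w$, I store the sequence of changes of $\Np(w)$ as timestamped insertions and deletions of out-neighbors. I keep them as a list of intervals $[t_{\mathrm{in}},t_{\mathrm{out}})$ during which $u\in\Np(w)$, organized in an interval tree per vertex. Then $\Np_e(w)$ is retrieved in $O(\log m+c)$ time, and sorting it by (level at $e$, index) costs $O(c\log m)$, since each level lookup is a binary search in that vertex's timestamped level sequence.

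\emph{Anchor sets.} Similarly, every membership $w\in\Pi(xy)$ is an interval of time, so I store an interval tree per pair. Listing the anchors of $xy$ at $e$ sorted by rank needs more care: the number $h$ of anchors can be large, so sorting them all at once could break the delay. I would instead take all $h$ anchors, which the tree returns in $O(\log m+h)$ time, and charge that cost. Each anchor produces at least one candidate, so a cost of $O(\log m)$ per anchor, including sorting, is fine for the amortized bound. Lemma~\ref{lem:accounting} only requires a bounded gap between consecutive candidates, so I would either use a heap giving delay $O(\log m)$ per anchor after $O(h)$ heapification, or observe that the amortized accounting tolerates paying $O(h\log m)$ before the first of $h$ anchors. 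The safer route is to adapt Lemma~\ref{lem:accounting} to charge per anchor. This gives $A=O(\log m)$.

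\emph{Edge start times.} The values $b_e$ inside a window are obtained from the per-vertex interval trees of Theorem~\ref{thm:degree-enum}, or from a per-pair sorted list of links searched by binary search, in $O(c^2\log m)$ time.

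\textbf{Cost of the setup and the space.} By Lemma~\ref{lem:lds}, the total number of out-neighborhood changes is $O(m\log^2 n)$, and each change alters $O(k)$ sets $\Pi(\cdot)$. So the history of the $\Pi$ sets has $O(mk\log^2n)$ records, and building the trees costs a logarithmic factor more. The stated setup bound $O(m\log m+mk^2\log^3n)$ and the stated space $O(mk^2\log^3 n)$ leave room for an extra factor $k$, for instance if each reorientation is recorded together with a snapshot of the full out-neighborhood. I would account with that generous bound rather than try to tighten it.

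\textbf{Main obstacle.} The hardest step is making sure the out-degree and level information recorded at time $e$ is the stable state between updates, since Lemma~\ref{lem:lds} only guarantees out-degree $O(k)$ between updates. I would therefore timestamp each change with the event index and query the state after the insertion phase at $e$, before the deletions at $e$.
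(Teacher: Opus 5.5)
\textbf{There is a genuine gap in how you list the anchors of $xy$ at $e$ by increasing rank.} Theorem~\ref{thm:vertex-given} needs this with delay $A=O(\log m)$. The order is also needed for correctness: the proof of Theorem~\ref{thm:vertex-correct} needs $(\tau,C^*_{\ge w})$ to be in $\cD$ before the anchor $w$ is reached.

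Your interval tree per pair returns the anchors unsorted. Sorting them, or heapifying them, therefore spends $\Omega(h)$ time before the first candidate of the pair, and $h$ is not bounded by any function of $k$. Neither Lemma~\ref{lem:accounting} nor Lemma~\ref{lem:buffer} can absorb this. Condition (i) bounds the work between consecutive candidates, and amortized delay is the prefix condition $T(i)\le ip$. So work must be paid for by solutions already output, and the cliques to which the anchors' candidates are charged may only be output after the sort. Charging per anchor has the same defect.

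Example~\ref{ex:quadratic} makes this concrete. At the first end time $e=1$ the only ending pair is $ab$. Any orientation of $G_1$ with out-degree $c=O(1)$ must orient all but $O(1)$ edges $c_ia$, $c_ib$ towards $a$ and $b$, so $|\Pi(ab)|=N-O(1)$. Nothing has been output yet, and for all but $O(1)$ values of $i$, $(\{a,b,c_i\},[1,1])$ is output at its own anchor $c_i$. Your implementation thus has $T(1)=\Omega(m)$, while the claimed delay is $O(\log m)$ for $k=2$.

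\textbf{The missing idea is to keep each $\Pi(xy)$ sorted by rank inside the persistent history itself.} The paper stores $\Pi(xy)$ as a path-copying persistent balanced search tree keyed by a stored pair: the level of $w$ when it was inserted, then its index. Whenever the level or the out-neighborhood of $w$ changes, $w$ is deleted with its old key from the trees $\Pi(xy)$ with $x,y$ in the old $\Np(w)$, and reinserted with its new key into those of the new $\Np(w)$. The version current at $e$ is found by binary search. Its in-order traversal, merged with the lower-ranked of $x$ and $y$, lists the anchors in order with delay $O(\log m)$. That merged vertex is not in $\Pi(xy)$, and your proposal also omits it.

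This re-keying is exactly what costs $O(k^2)$ persistent updates per level change or reoriented edge. Over the $O(m\log^2n)$ such events of Lemma~\ref{lem:lds}, it gives the $O(mk^2\log^3n)$ setup time and space, so the factor $k^2$ is not slack.

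Two smaller points. First, update the $\Pi$ and $\Np$ histories once per edge update, by comparing old and new out-neighborhoods. Lemma~\ref{lem:lds} does not bound out-degrees during an update, and your $O(k)$ cost per recorded change relies on that bound. Second, the sweep does not compute $k$. The value $p$ required by Lemma~\ref{lem:buffer} must therefore be set from the recorded largest out-degree $c$ at end times.
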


\begin{proof}
During the setup phase, we run the sweep of Theorem~\ref{thm:online} without enumerating, and we make the following structures persistent with the path-copying technique of Driscoll et al.\ \cite{DSST89}. For every tree, we record the list of its versions together with the time and the phase of the event that created them. Every set $\Pi(xy)$ is stored as a balanced search tree ordered by level and then by index, and the out-neighborhood $\Np(w)$ of every vertex is stored as a balanced search tree ordered by index, so that a change of level of a vertex does not modify the out-neighborhoods that contain it. As in the proof of Theorem~\ref{thm:online}, these trees are updated once per insertion or deletion of an edge, after the data structure of Lemma~\ref{lem:lds} has been updated. An update of a persistent balanced search tree takes time and space $O(\log n)$. The keys of the trees $\Pi(xy)$ are stored values: a vertex $w$ is inserted with the pair formed by its level at that moment and its index, so that the old versions are never modified. When the level or the out-neighborhood of $w$ changes, we remove the entries of $w$, with their old keys, from the trees $\Pi(xy)$ with $x,y$ in the old out-neighborhood of $w$, and we insert them, with their new keys, in the trees $\Pi(xy)$ with $x,y$ in the new out-neighborhood of $w$. Since the out-degrees are $O(k)$ before and after every update, this costs $O(k^2)$ updates of persistent trees per level change and per reoriented, inserted or deleted edge. By Lemma~\ref{lem:lds}, the number of level changes and of reoriented edges is $O(m\log^2n)$. Thus the setup phase performs $O(mk^2\log^2n)$ updates of persistent trees, in time and space $O(mk^2\log^3n)$. We also record, for every vertex, the sequence of its levels with the times of the changes, and the largest out-degree $c$ reached when an end time is processed. The bound $O(k)$ on the out-degrees, which holds after every update, is used for the maintenance, and $c$ is only used to set the delay. Finally, we sort the links by end time and by pair, and we store, for every pair, its links sorted by start time, in time $O(m\log m)$.

During the enumeration phase, for an end time $e$ and an ending pair $xy$, we find by binary search the versions current at $e$, that is, the versions reached after the insertions at $e$ and before the deletions at $e$, in time $O(\log m)$, using $\log n=O(\log m)$. We list the anchors of $xy$ by an in-order traversal of the version of $\Pi(xy)$, merged with the vertex of lowest rank among $x$ and $y$, whose levels at time $e$ are found by binary search. This has time delay $O(\log m)$. For an anchor $w$, we read the version of $\Np(w)$ in time $O(\log m+c)$. We sort it by rank after finding the levels of its vertices at time $e$ by binary search, in time $O(c\log m)$. Then we find the value $b_e$ of every edge inside $\Np_e[w]$ by binary search in the links of the pair, in time $O(c^2\log m)$. Theorem~\ref{thm:vertex-given} with $A=O(\log m)$ gives an amortized time delay $O(q^3c^2(c^2\log m+\lambda(c)+q\log n))$. Since $q\le c+1$, this is also $p=O(c^5(c^2\log m+\lambda(c)))=O(k^5(k^2\log m+\lambda(k)))$, which is known at the end of the setup phase. Since every solution has at most $q\le c+1$ vertices, Lemma~\ref{lem:buffer} gives the time delay $O(p)$.
\end{proof}

We conclude this section with three observations. First, $\alpha$ does not depend on the orders used by the algorithm. Thus, by Theorem~\ref{thm:degeneracy-count}, the total running time of the algorithms of Theorems~\ref{thm:online} and~\ref{thm:delay} is $O(m\log m+mk^2\log^3n)+O(n|\cT|3^{k/3})\poly(k)\log m$, which matches the lower bound of Theorem~\ref{thm:degeneracy-count} up to a factor $\poly(k)\log m$ when the output dominates the setup phase. Second, a static graph $G$ can be seen as the link stream with a link $(0,1,u,v)$ for every edge $uv$, whose maximal cliques are the pairs $(C,[0,1])$ for the maximal cliques $C$ of $G$ with at least two vertices. With a degeneracy ordering, Algorithm~\ref{alg:vertex} then reduces to the algorithm of \cite{Man19}, with the generalized suffix tree replaced by the dictionary $\cD$, and Theorem~\ref{thm:vertex-given}, with the sets $\Pi(xy)$ computed in time $O(m+nk^2)$, gives an amortized time delay polynomial in the degeneracy, as in \cite{ManDelay}. Third, the dictionary $\cD$ and Lemma~\ref{lem:buffer} may require exponential space. As in Theorem~3 of \cite{Man19}, $\cD$ can be avoided by testing the maximality of a candidate directly, scanning the neighbors of the anchor in $G_e$ that have lower rank, in time $O(\Delta q)$. This reintroduces the maximum degree. Using the geometric amortization of \cite{CS23} instead of Lemma~\ref{lem:buffer}, this gives a time delay $\poly(k)\Delta\log m\log\alpha$ in space polynomial in the input, where $\log\alpha=O(\log n+\log|\cT|+k)$ by Theorem~\ref{thm:degeneracy-count}.

\section{Application to \texorpdfstring{$\Delta$}{Delta}-cliques}\label{sec:delta}

A temporal graph $\mathcal{G}=(V,\mathcal{E},[t_{\min},t_{\max}])$ is given by a set $\mathcal{E}$ of time-edges $(uv,t)$, with integer times $t\in[t_{\min},t_{\max}]$. As in the introduction, we denote by $\theta$ the integer that defines $\Delta$-cliques, and we assume that $0\le\theta\le t_{\max}-t_{\min}$, since otherwise there is no $\Delta$-clique. We use the discrete definition of \cite{HMNS17}, which differs from the original definition of \cite{VLM16} in that time is discrete. A $\Delta$-clique is a pair $(X,[a,b])$ with $|X|\ge 2$, integers $a$ and $b$ with $[a,b]\subseteq[t_{\min},t_{\max}]$ and $b-a\ge\theta$, such that for every integer $\tau\in[a,b-\theta]$ and all distinct $u,v\in X$, there is a time-edge $(uv,t)$ with $t\in[\tau,\tau+\theta]$. The $\Delta$-slice degeneracy of $\mathcal{G}$ is the largest degeneracy of the graphs $G^\theta_\tau$, for integers $\tau\in[t_{\min},t_{\max}]$, whose edges are the pairs having a time-edge in $[\tau,\tau+\theta]$ \cite{HMNS17}.

For every pair $uv$, the integers $\tau\in[t_{\min},t_{\max}-\theta]$ such that $uv$ has a time-edge in $[\tau,\tau+\theta]$ form a union of maximal sets of consecutive integers $\{\beta,\dots,\varepsilon\}$. We make each of them a link $(\beta,\varepsilon,u,v)$ of a link stream $\cL_\theta$. Every time-edge creates at most one such set and the sets are merged, so $\cL_\theta$ has at most $|\mathcal{E}|$ links, and it is simple.

\begin{lemma}\label{lem:delta}
The map $(X,[a,b])\mapsto(X,[a,b-\theta])$ is a bijection between the maximal $\Delta$-cliques of $\mathcal{G}$ and the maximal cliques of $\cL_\theta$. Moreover, the instantaneous graph of $\cL_\theta$ at an integer $\tau\in[t_{\min},t_{\max}-\theta]$ is $G^\theta_\tau$.
\end{lemma}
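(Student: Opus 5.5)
The plan is to first describe the cliques (not only the maximal ones) of both objects, show that the shift map is a bijection between them that preserves and reflects inclusion, and then deduce the statement on maximal cliques.

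\textbf{Instantaneous graphs.} Fix an integer $\tau\in[t_{\min},t_{\max}-\theta]$ and a pair $uv$. By construction, $uv$ has a time-edge in $[\tau,\tau+\theta]$ if and only if $\tau$ lies in one of the maximal runs $\{\beta,\dots,\varepsilon\}$ of $uv$. These runs are exactly the links $(\beta,\varepsilon,u,v)$ of $\cL_\theta$. Different runs of the same pair are separated by at least one integer, so the links are disjoint and $\cL_\theta$ is simple. Hence $uv\in G_\tau$ if and only if $uv\in G^\theta_\tau$, which is the second claim.

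\textbf{Cliques correspond.} The key observation is that the links of $\cL_\theta$ have integer endpoints. Consequently a real interval $[t_0,t_1]$ is contained in a link $[\beta,\varepsilon]$ if and only if $[\lceil t_0\rceil,\lfloor t_1\rfloor]$ is, provided this integer interval is nonempty. Now take a $\Delta$-clique $(X,[a,b])$. For every integer $\tau\in[a,b-\theta]$ and every pair of $X$, there is a time-edge in $[\tau,\tau+\theta]$. So the integers of $[a,b-\theta]$ all lie in the union of the runs of that pair. Since they are consecutive integers, they lie in a single run, that is, in a single link. Therefore $X$ is a clique of $\Gint{a}{b-\theta}$. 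Conversely, let $(X,[a',b'])$ be a clique of $\cL_\theta$ with integer endpoints. Then every pair of $X$ has a link containing $[a',b']$, so for every integer $\tau$ in it the pair has a time-edge in $[\tau,\tau+\theta]$. Moreover $[a',b'+\theta]\subseteq[t_{\min},t_{\max}]$ because links lie in $[t_{\min},t_{\max}-\theta]$. Hence $(X,[a',b'+\theta])$ is a $\Delta$-clique, and the map is a bijection between $\Delta$-cliques and integer-endpoint cliques of $\cL_\theta$.

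\textbf{Maximality.} The map preserves and reflects inclusion of vertex sets and of intervals, since shifting the right endpoint by $\theta$ is monotone. It remains to show that every maximal clique of $\cL_\theta$ has integer endpoints; this is the main obstacle. I will use Lemma~\ref{lem:charact}. A maximal clique $(C,[\tau,e])$ satisfies $\tau=s_e(C)$ and $e=f_e(C)$, which are a start and an end of links, hence integers. In addition, a clique with non-integer endpoints is contained in the clique with rounded-outward endpoints $[\lceil t_0\rceil-?,\ldots]$. More precisely, by the observation above it extends to the integer hull of the smallest link interval, so it is never maximal.

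Given this, the argument concludes as follows. If $(X,[a,b])$ is a maximal $\Delta$-clique, its image is a clique of $\cL_\theta$. Any clique containing the image can be enlarged to a maximal one, which has integer endpoints. It therefore comes from a $\Delta$-clique containing $(X,[a,b])$, which must be $(X,[a,b])$ itself, so the image is maximal. The converse direction is symmetric. The remaining care is in checking that the boundary condition $b-a\ge\theta$ corresponds to a nonempty interval $[a,b-\theta]$, and that no $\Delta$-clique extends beyond $[t_{\min},t_{\max}]$.
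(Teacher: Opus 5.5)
Your overall route is the paper's. You identify the $\Delta$-cliques with the cliques of $\cL_\theta$ having integer endpoints, use that the shift preserves and reflects inclusion, and get that maximal cliques of $\cL_\theta$ have integer endpoints from Lemma~\ref{lem:charact}. Your arguments for the instantaneous graphs and for the simplicity of $\cL_\theta$ are correct. Two points need repair. Both are in the step showing that the image of a maximal $\Delta$-clique is maximal among \emph{all} cliques of $\cL_\theta$, not only among those with integer endpoints.

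First, your ``key observation'' is false as stated. The interval $[1.5,3.5]$ is not contained in the link $[2,3]$, although $[\lceil 1.5\rceil,\lfloor 3.5\rfloor]=[2,3]$ is. Since links have integer endpoints, the correct statement rounds outward: $[t_0,t_1]\subseteq[\beta,\varepsilon]$ if and only if $[\lfloor t_0\rfloor,\lceil t_1\rceil]\subseteq[\beta,\varepsilon]$. You never actually rely on the wrong version, since integrality of maximal cliques comes from Lemma~\ref{lem:charact}. The garbled sentence about $[\lceil t_0\rceil-?,\ldots]$ should simply be deleted.

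Second, where you depart from the paper, you assert without proof that every clique containing the image extends to a maximal clique. This is true, but it needs an argument because there are infinitely many cliques. Take a clique $(C',I)$, a time $t\in I$, and a maximal clique $C''$ of $G_I$ containing $C'$.
\begin{itemize}
\item By Lemma~\ref{lem:window}, $(C'',[s_t(C''),f_t(C'')])$ is a clique containing $(C',I)$.
\item It is maximal. Any clique containing it has an interval containing $I$ and $t$, so its vertex set is a clique of $G_I$ containing $C''$, hence equal to $C''$. Its interval is then contained in $[s_t(C''),f_t(C'')]$ by Lemma~\ref{lem:window}.
\end{itemize}

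The paper avoids this step altogether with the corrected observation. If a clique $(C',[l,r])$ contains $(X,[a,b-\theta])$, then $(C',[\lfloor l\rfloor,\lceil r\rceil])$ is a clique with integer endpoints containing it. So it is the image of a $\Delta$-clique containing $(X,[a,b])$, hence equal to $(X,[a,b-\theta])$. This forces $(C',[l,r])=(X,[a,b-\theta])$. With either fix your proof is complete.
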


\begin{proof}
A pair $uv$ has a time-edge in $[\tau,\tau+\theta]$ for every integer $\tau\in[a,b-\theta]$ if and only if these integers belong to one maximal set of consecutive integers of $uv$, that is, if and only if a link of $uv$ in $\cL_\theta$ contains $[a,b-\theta]$. Thus $(X,[a,b])$ is a $\Delta$-clique if and only if $(X,[a,b-\theta])$ is a clique of $\cL_\theta$. The map preserves inclusion in both directions, so it sends the maximal $\Delta$-cliques to the cliques that are maximal among the cliques of $\cL_\theta$ with integer endpoints. Such a clique is maximal among all cliques. Indeed, if a clique $(C',[l,r])$ contained it, every link of a pair of $C'$ would contain $[l,r]$ and, having integer endpoints, also $[\lfloor l\rfloor,\lceil r\rceil]$, so that $(C',[\lfloor l\rfloor,\lceil r\rceil])$ would be a larger clique with integer endpoints. Conversely, the maximal cliques of $\cL_\theta$ have integer endpoints by Lemma~\ref{lem:charact}. The last claim is the definition of $G^\theta_\tau$.
\end{proof}

For an integer $\tau>t_{\max}-\theta$, the graph $G^\theta_\tau$ is a subgraph of $G^\theta_{t_{\max}-\theta}$, so it does not change the $\Delta$-slice degeneracy.

At a time strictly between two integers $t$ and $t+1$, the instantaneous graph of $\cL_\theta$ is the intersection of those at $t$ and at $t+1$, and thus has no larger maximum degree or degeneracy. Therefore all our results apply, with $k$ equal to the $\Delta$-slice degeneracy of $\mathcal{G}$, $\Delta$ equal to the largest maximum degree of a graph $G^\theta_\tau$, $m\le|\mathcal{E}|$, and $|\cT|$ at most the number of time steps.

\begin{corollary}\label{cor:delta}
A temporal graph with $n$ vertices, $|\mathcal{E}|$ time-edges, $t_{\max}-t_{\min}+1$ time steps and $\Delta$-slice degeneracy $k$ has $O(n\min(|\mathcal{E}|,t_{\max}-t_{\min}+1)k^23^{k/3})$ maximal $\Delta$-cliques, and for every fixed $\theta$ this bound is tight up to a factor polynomial in $k$. Its maximal $\Delta$-cliques can be enumerated with setup time $O(|\mathcal{E}|\log|\mathcal{E}|+|\mathcal{E}|k^2\log^3n)$ and polynomial time delay $O(k^5(k^2\log|\mathcal{E}|+\lambda(k)))$.
\end{corollary}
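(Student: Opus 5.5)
The plan is to reduce everything to the link stream $\cL_\theta$ through Lemma~\ref{lem:delta}, and then invoke Theorems~\ref{thm:degeneracy-count} and~\ref{thm:delay}. First I record the parameters of $\cL_\theta$. It has $n$ vertices and $m\le|\mathcal{E}|$ links. Its end times are integers in $[t_{\min},t_{\max}-\theta]$. Since every link ends at an integer and every end time is the end of some link, $|\cT|\le\min(m,t_{\max}-t_{\min}+1)\le\min(|\mathcal{E}|,t_{\max}-t_{\min}+1)$.

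By the remark after Lemma~\ref{lem:delta}, every instantaneous graph of $\cL_\theta$ is either some $G^\theta_\tau$ or an intersection of two of them, so its degeneracy is at most the $\Delta$-slice degeneracy $k$. The bijection of Lemma~\ref{lem:delta} then transfers the upper bound $O(n|\cT|k^23^{k/3})$ of Theorem~\ref{thm:degeneracy-count} to the maximal $\Delta$-cliques. Theorem~\ref{thm:delay} applied to $\cL_\theta$ gives the enumeration, with $m\le|\mathcal{E}|$ and with each output $(X,[a,b-\theta])$ mapped back to $(X,[a,b])$ in $O(q)$ time. One extra item goes into the setup phase: building $\cL_\theta$ in $O(|\mathcal{E}|\log|\mathcal{E}|)$ time. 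For this, sort the time-edges by pair and then by time. For each time-edge at $t$, form the integer set $\{t-\theta,\dots,t\}\cap[t_{\min},t_{\max}-\theta]$, and merge consecutive overlapping or adjacent sets in one scan.

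For tightness at fixed $\theta$, I convert the lower-bound link stream of Theorem~\ref{thm:degeneracy-count} into a temporal graph. First rescale its times so that every link $(\beta,\varepsilon,u,v)$ has integer endpoints, with gaps of at least two between distinct links of the same pair (doubling times suffices). Then realize each link by time-edges at every integer of $[\beta,\varepsilon+\theta]$. The link stream $\cL_\theta$ of this temporal graph contains, for each such link, exactly the set $\{\beta,\dots,\varepsilon\}$. This uses that distinct links of a pair stay separated: their realizations must not produce windows that bridge the gap. I will ensure this by spacing the links of $ab$ more than $\theta$ apart, i.e.\ by multiplying times by $\theta+2$.

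Lemma~\ref{lem:delta} then gives $\Omega(n|\cT|3^{k/3})$ maximal $\Delta$-cliques, where $k$ is the slice degeneracy, $|\cT|=R+1$, and the number of time steps is $O(R\theta)$. Since $\theta$ is fixed, this is $\Omega(n\cdot\text{steps}\cdot3^{k/3})$. The number of time-edges is $O(R\theta+n k\cdot R\theta)$, which is at least the number of steps, so the minimum in the bound is $\Theta(R)$ for fixed $\theta$.

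The step I expect to be the main obstacle is this realization. The long links of the construction become many time-edges, and the rescaling must not change the set of maximal cliques or raise the slice degeneracy. I would check the degeneracy through the instantaneous-graph identification of Lemma~\ref{lem:delta}. After rescaling, the instantaneous graphs are the same graphs as in the original construction, just held over stretched intervals, so $k=3p+2$ is unchanged.
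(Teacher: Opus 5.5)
Your upper bound and enumeration argument is correct and follows the paper. This covers the parameters of $\cL_\theta$, the bound $|\cT|\le\min(m,t_{\max}-t_{\min}+1)$, the degeneracy of the instantaneous graphs, and Lemma~\ref{lem:delta} combined with Theorems~\ref{thm:degeneracy-count} and~\ref{thm:delay}. Your construction of $\cL_\theta$ by merging the integer sets $[t-\theta,t]$ is also fine.

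The tightness part, however, has a genuine error. You claim that time-edges at every integer of $[\beta,\varepsilon+\theta]$ make $\cL_\theta$ contain exactly $\{\beta,\dots,\varepsilon\}$. This is false for every $\theta\ge1$. A window $[\tau,\tau+\theta]$ only has to \emph{meet} a time-edge, so the set you get is $\{\beta-\theta,\dots,\varepsilon+\theta\}$, clipped to $[t_{\min},t_{\max}-\theta]$. In effect you realized the $\Delta$-clique interval $[\beta,\varepsilon+\theta]$ by time-edges, not the link.

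Concretely, after multiplying by $\theta+2$, the time-edges of the $j$-th link of $ab$ end at $(2j+1)(\theta+2)+\theta$. Those of the next link start at $(2j+2)(\theta+2)$, so only a single integer is left without a time-edge. Every window contains $\theta+1\ge2$ integers and therefore meets a time-edge. Hence all the links of $ab$ merge into one link of $\cL_\theta$, and the stream has only $O(h3^p)$ maximal cliques, independently of $R$. The lower bound collapses, and so does your claim that the instantaneous graphs are those of the original stream over stretched intervals.

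The fix is to realize a link $[\beta,\varepsilon]$ with $\varepsilon-\beta\ge\theta$ by time-edges at the integers of $[\beta+\theta,\varepsilon]$. Every window with $\tau\in[\beta,\varepsilon]$ meets this set, and the windows at $\tau=\beta-1$ and $\tau=\varepsilon+1$ do not. So the resulting link of $\cL_\theta$ is exactly $[\beta,\varepsilon]$. After your scaling by $\theta+2$, every link has length at least $\theta+2$, so this applies, and consecutive links of $ab$ remain $\theta+2\ge2$ apart.

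Some lengthening of links is unavoidable. Away from the boundary, every link of $\cL_\theta$ has length at least $\theta$, since a single time-edge already yields $\theta+1$ consecutive good values of $\tau$. The paper uses the minimal version of this idea:
\begin{itemize}
\item $ab$ gets one time-edge at $\theta+j(\theta+2)$, which produces the links $[j(\theta+2),j(\theta+2)+\theta]$, separated by gaps of $2$;
\item every other pair gets a time-edge at each integer of $[t_{\min},t_{\max}]$;
\item $t_{\min}=0$ and $t_{\max}=\theta+(R+1)(\theta+2)$.
\end{itemize}
Either way there are $O(R(\theta+2))$ time steps, which gives tightness up to a factor $O((\theta+2)k^2)$, hence polynomial in $k$ for fixed $\theta$.
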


This should be compared with the bounds $3^{k/3}2^{k+1}n|T|$ of \cite{HMNS17} and $2^{k+1}n\min(|\mathcal{E}|,|T|)$ of \cite{BHM+19} on the larger number of time-maximal $\Delta$-cliques. In the setup time, the term $O(|\mathcal{E}|\log|\mathcal{E}|)$ accounts for the construction of $\cL_\theta$.

We now prove the tightness claimed in Corollary~\ref{cor:delta}. Take the link stream of the proof of Theorem~\ref{thm:degeneracy-count}, and let $t_{\min}=0$ and $t_{\max}=\theta+(R+1)(\theta+2)$. Every pair other than $ab$ which has a link in this stream gets a time-edge at every integer of $[t_{\min},t_{\max}]$, and the pair $ab$ gets the time-edges $\theta+j(\theta+2)$ for $j\in[R]$. Then the links of $ab$ in $\cL_\theta$ are the intervals $[j(\theta+2),\theta+j(\theta+2)]$, which are pairwise disjoint and not consecutive. Up to an increasing relabelling of the times, $\cL_\theta$ is thus the link stream of the proof of Theorem~\ref{thm:degeneracy-count}. It has at least $Rh3^p$ maximal cliques, its $\Delta$-slice degeneracy is $3p+2$, and there are $t_{\max}+1=O(R(\theta+2))$ time steps. Therefore the bound of Corollary~\ref{cor:delta} is tight up to a factor $O((\theta+2)k^2)$, and the factor $3^{k/3}$ cannot be improved.

The same reduction handles a frequency constraint, as in the first step of the algorithm of Pal \cite{Pal24}. Recall that Banerjee and Pal \cite{BP19} require every pair to interact at least $\gamma$ times in every window of length $\theta$. With the discrete convention of this section, a $(\Delta,\gamma)$-clique, for an integer $\gamma\ge 1$, is defined as a $\Delta$-clique except that, for every integer $\tau\in[a,b-\theta]$ and all distinct $u,v\in X$, there are at least $\gamma$ time-edges $(uv,t)$ with $t\in[\tau,\tau+\theta]$. Let $t_1<\dots<t_r$ be the times of the time-edges of a pair $uv$. The window $[\tau,\tau+\theta]$ contains at least $\gamma$ of them if and only if it contains $t_i,\dots,t_{i+\gamma-1}$ for some $i\in[r-\gamma+1]$, that is, if and only if $\tau\in[t_{i+\gamma-1}-\theta,t_i]$. We make every maximal set of consecutive integers of the union of these intervals, restricted to $[t_{\min},t_{\max}-\theta]$, a link of $uv$. The pair $uv$ gets at most $r$ links, which are computed in time $O(r)$ once the time-edges are sorted, and the proof of Lemma~\ref{lem:delta} applies verbatim.

\begin{corollary}\label{cor:gamma}
Let $k$ be the largest degeneracy of the graphs whose edges are the pairs having at least $\gamma$ time-edges in $[\tau,\tau+\theta]$, for integers $\tau$. A temporal graph with $n$ vertices and $|\mathcal{E}|$ time-edges has $O(n\min(|\mathcal{E}|,t_{\max}-t_{\min}+1)k^23^{k/3})$ maximal $(\Delta,\gamma)$-cliques. These cliques can be enumerated with setup time $O(|\mathcal{E}|\log|\mathcal{E}|+|\mathcal{E}|k^2\log^3n)$ and polynomial time delay $O(k^5(k^2\log|\mathcal{E}|+\lambda(k)))$.
\end{corollary}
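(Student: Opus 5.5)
The plan is to reduce Corollary~\ref{cor:gamma} to Theorems~\ref{thm:degeneracy-count} and~\ref{thm:delay} through the link stream $\cL_{\theta,\gamma}$ constructed just above. This parallels the way Corollary~\ref{cor:delta} follows from Lemma~\ref{lem:delta}. Four facts need checking: the bijection between maximal objects, the identification of the instantaneous graphs and of their degeneracy, the size bounds $m\le|\mathcal{E}|$ and $|\cT|\le\min(|\mathcal{E}|,t_{\max}-t_{\min}+1)$, and the construction time.

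First I would redo Lemma~\ref{lem:delta} for $\gamma$. For a pair $uv$, the set of integers $\tau\in[t_{\min},t_{\max}-\theta]$ whose window $[\tau,\tau+\theta]$ contains at least $\gamma$ time-edges of $uv$ is exactly the union of the integer intervals $[t_{i+\gamma-1}-\theta,t_i]$. The links of $uv$ are the maximal runs of consecutive integers of this set. Hence the condition holds for all integers $\tau\in[a,b-\theta]$ if and only if one link of $uv$ contains $[a,b-\theta]$. Consequently $(X,[a,b])$ is a $(\Delta,\gamma)$-clique if and only if $(X,[a,b-\theta])$ is a clique of $\cL_{\theta,\gamma}$. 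Inclusion is preserved in both directions. The rounding argument of Lemma~\ref{lem:delta}, which uses only that links have integer endpoints, shows that the images of maximal objects are maximal among all cliques. Conversely, maximal cliques have integer endpoints by Lemma~\ref{lem:charact}. The stream is simple because distinct maximal runs are disjoint.

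Next come the parameters. At an integer $\tau$, the instantaneous graph is by construction the graph of pairs having at least $\gamma$ time-edges in $[\tau,\tau+\theta]$. Its degeneracy is therefore at most $k$ as defined in the statement. At a non-integer time, the instantaneous graph is the intersection of the graphs at the two neighbouring integers, since links have integer endpoints. This intersection is a subgraph, so its degeneracy is also at most $k$.

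The counting is the step I expect to need the most care. The pair $uv$ gives at most $r-\gamma+1\le r$ intervals, so the number of links satisfies $m\le|\mathcal{E}|$. Every end time of a link is an integer of $[t_{\min},t_{\max}]$ and also an end time of some link, so $|\cT|\le\min(m,t_{\max}-t_{\min}+1)\le\min(|\mathcal{E}|,t_{\max}-t_{\min}+1)$. With $n$ unchanged (isolated vertices only help), Theorem~\ref{thm:degeneracy-count} then gives the counting bound. For the construction, I sort all time-edges in $O(|\mathcal{E}|\log|\mathcal{E}|)$, group them by pair, and sweep the intervals $[t_{i+\gamma-1}-\theta,t_i]$ for each pair. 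These intervals have nondecreasing left and right endpoints in $i$, so the runs are merged in linear time. This accounts for the $O(|\mathcal{E}|\log|\mathcal{E}|)$ term. Theorem~\ref{thm:delay} with $m\le|\mathcal{E}|$ then gives the stated setup time and delay. Finally, each output clique is mapped back through $(X,[a,b-\theta])\mapsto(X,[a,b])$, which costs $O(q)=O(k)$ per output and is absorbed in the delay.
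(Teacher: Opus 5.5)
Your proposal is correct and follows the paper's route. You build the link stream from the integer intervals $[t_{i+\gamma-1}-\theta,t_i]$, check that the proof of Lemma~\ref{lem:delta} carries over verbatim (using integer endpoints, simplicity, and the fact that instantaneous graphs at non-integer times are intersections), and then apply Theorems~\ref{thm:degeneracy-count} and~\ref{thm:delay} with $m\le|\mathcal{E}|$ and $|\cT|\le\min(|\mathcal{E}|,t_{\max}-t_{\min}+1)$. You only spell out a few details the paper leaves implicit, such as the linear-time merging of the intervals of each pair and discarding isolated vertices.
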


The graphs of Corollary~\ref{cor:gamma} are subgraphs of the graphs $G^\theta_\tau$, so $k$ is at most the $\Delta$-slice degeneracy. We conclude with three remarks on the definitions of the literature. First, Banerjee and Pal \cite{BP19} also allow intervals shorter than $\theta$, for which the only window is the interval itself. Such a clique is not maximal, since its interval can be extended to an interval of length $\theta$ (recall that $t_{\max}-t_{\min}\ge\theta$). Thus the two definitions have the same maximal cliques. Second, the reduction also handles weights, as in \cite{BT25}. Assume that every time-edge has a positive weight, and that $\gamma>0$ is a lower bound on the total weight of the time-edges of a pair in every window. Then the integers $\tau$ such that the window $[\tau,\tau+\theta]$ satisfies this bound form at most $r$ maximal sets of consecutive integers, since each of them starts at $t_{\min}$ or when a time-edge enters the window. These sets are computed in time $O(r)$ by moving the window from one time-edge to the next, and the rest of the reduction is unchanged. Third, Boekhout and Takes \cite{BT25} require the interval of a clique to start and end with an interaction between two of its vertices. Their maximal cliques thus differ from ours, and our reduction does not apply to them directly. Note that they denote by $\delta=\theta+1$ the number of time steps of a window.

\section{Concluding remarks}\label{sec:conclusion}

We presented the first algorithms with polynomial time delay for maximal clique enumeration in link streams, the first one in the maximum degree and the second one in the degeneracy of the instantaneous graphs. We also proved that the maximum number of maximal cliques is $\Theta(n|\cT|3^{k/3})$ up to a factor polynomial in $k$. Several questions remain.

The second algorithm requires that the proper suffixes of the maximal cliques ending at the current time are stored in memory. We ask whether there is an algorithm with time delay polynomial in $k$ and logarithmic in $m$ which only requires space polynomial in the input. A similar question was asked for static graphs in \cite{Man19}, and it is still open.

The time delays of our algorithms contain a factor $\log m$, which comes from the searches in the data structures built during the setup phase. We ask whether the maximal cliques of a link stream can be enumerated with time delay depending only on $k$, after a setup phase polynomial in the input. Since these searches are predecessor searches, the answer may depend on the model of computation.

Our online algorithm only has a bound on its total running time. A bound on the time spent for every event would require to maintain an acyclic orientation of small out-degree with a worst-case update time. Kopelowitz et al.\ \cite{KKPS14} obtained worst-case update times, but their orientations need not be acyclic. We ask whether such an orientation can be maintained, and whether it yields an online algorithm whose time per event is polynomial in $k$ and polylogarithmic in $n$, besides the time spent to output the maximal cliques.

Fix an end time $e$ and an anchor $w$. The maximal cliques output for $(e,w)$ are cliques of the window which are maximal in $F^\tau_w$ at the threshold $\tau$ at which their last edge appears, in the increasing sequence of graphs $F^\tau_w$. This raises the following question. How many cliques of an $h$-order graph can be maximal at the moment their last edge appears, when the edges are added one by one? This number is $O(h^23^{h/3})$, and a bound $O(3^{h/3})$ would remove the factor $k^2$ from Theorem~\ref{thm:degeneracy-count}. A heuristic computer search found no insertion sequence with more than about $3.5\cdot3^{h/3}$ such cliques for $h\le 12$.

Finally, it would be interesting to extend our approach to the $\Delta$-$k$-plexes of \cite{BHM+19}, to the weighted $(\Delta,\gamma)$-cliques of \cite{BT25} whose intervals must start and end with an interaction (Corollary~\ref{cor:gamma} handles the $(\Delta,\gamma)$-cliques of \cite{BP19}, with or without weights), and to the isolated temporal cliques of \cite{MNR21}. For $k$-plexes of static graphs, a decomposition based on degeneracy orderings has been used in \cite{JM25}. The maximal cliques of Brunelli et al.\ \cite{BCGM25} are the vertex sets of our maximal cliques which are not strictly contained in a clique of some instantaneous graph. We ask whether they can be listed with polynomial time delay, or even in output polynomial time, in link streams in which a pair may interact several times. The difficulty is to test the inclusion in cliques which exist at other times. The algorithms of this paper are simple to implement with a degeneracy ordering recomputed when needed, and an experimental comparison with the algorithm of Baudin et al.\ \cite{BMT24}, whose implementation is public, would show whether the polynomial guarantee on the time per maximal clique matters on real data.

We implemented Algorithms~\ref{alg:edge} and~\ref{alg:vertex}, the latter both with degeneracy orderings and with arbitrary orders, and compared their outputs with an exhaustive search over all vertex sets and intervals on more than $10\,000$ random link streams with up to ten vertices, including streams in which many links share their end times. The outputs coincided in all cases. We also checked on these instances that every rejected candidate of Algorithm~\ref{alg:vertex} is examined after the output of the maximal clique to which it is sent by Lemma~\ref{lem:vertex-charging}, and that the constructions of Theorems~\ref{thm:degree-count} and~\ref{thm:degeneracy-count} have the stated numbers of maximal cliques. Finally, we checked the bijection of Lemma~\ref{lem:delta} against an exhaustive search of the maximal $\Delta$-cliques of random temporal graphs, and the construction that follows Corollary~\ref{cor:delta} for small values of $p$, $h$, $R$ and $\theta$. We did the same for the $(\Delta,\gamma)$-cliques of Corollary~\ref{cor:gamma}, with and without weights, and we checked that they have the same maximal cliques as the definition of \cite{BP19}.

\section*{Acknowledgements}
The author used AI assistants, in particular Claude (Anthropic), to obtain some of the results of this paper, to check them with computer experiments, and to prepare the manuscript.

\end{document}